\declaretheoremstyle[%
  spaceabove=-6pt,%
  spacebelow=6pt,%
  headfont=\normalfont\itshape,%
  postheadspace=1em,%
  qed=\qedsymbol%
]{mystyle} 
\newtheorem{thm}{Theorem}[section]
\newtheorem{prop}{Proposition}[section]
\newtheorem{cor}[thm]{Corollary}
\newtheorem{lemma}[thm]{Lemma}
\newtheorem{example}[thm]{Example}
\newcommand{\FF}{{\rm I\!F}}
\newcommand{\CC}{\mathcal{C}}
\newcommand{\CD}{\mathcal{D}}
\newcommand{\ba}{\mathbf{a}}
\title{Additive Complementary Pairs of Codes}
\author{Sanjit Bhowmick$^{1}$, Deepak Kumar Dalai$^{1}$  %Sihem Mesnager$^{2}$
\footnote{
$^{1}$School of Mathematical Sciences,
National Institute of Science Education and Research,\\
An OCC of Homi Bhabha National Institute, Bhubaneswar, Odisha 752050, India. Email: sanjitbhowmick@niser.ac.in; deepak@niser.ac.in\\
%Corresponding author:$^1$
%$^{2}$Department of Mathematics, University of Paris VIII, F-93526 Saint-Denis, Laboratory Analysis, Geometry and Applications, LAGA, University Sorbonne Paris Nord, CNRS, UMR 7539, F-93430, Villetaneuse, France and Telecom Paris, Polytechnic institute of Paris, 91120 Palaiseau, France. Email: smesnager@univ-paris8.fr 
}
}
\begin{document}
\maketitle
\begin{abstract} 
An additive code is an $\FF_q$-linear subspace of $\FF_{q^m}^n$ over $\FF_{q^m}$, which is not a linear subspace over $\FF_{q^m}$. Linear complementary pairs (LCP) of codes have important roles in cryptography, such as increasing the speed and capacity of digital communication and strengthening security by improving the encryption necessities to resist cryptanalytic attacks. This paper studies an algebraic structure of additive complementary pairs (ACP) of codes over $\FF_{q^m}$. Further, we characterize an ACP of codes in analogous generator matrices and parity check matrices. Additionally, we identify a necessary condition for an ACP of codes. Besides, we present some constructions of an ACP of codes over $\FF_{q^m}$ from LCP codes over $\FF_{q^m}$ and also from an LCP of codes over $\FF_q$. Finally, we study the constacyclic ACP of codes over $\FF_{q^m}$ and the counting of the constacyclic ACP of codes.
\end{abstract}

\noindent\textbf{Keywords:} Additive codes, Additive complementary pairs of codes, Linear complementary pairs of codes, Constacyclic $\FF_q$-linear codes.

\noindent\textbf{2020 AMS Classification Code:} 94B05. 

\section{Introduction}\label{sec:intr}
Linear complementary pairs (LCP) of codes, which were introduced by Bhasin et al. in~\cite{NBD15}, are extensively explored for wide application in cryptography (see~\cite{BCC14,CG16}). Further, we refer to some papers~\cite{BDM23,CG18,GOS18} on the LCP of codes over a finite field, and we point out that complementary pairs of codes without linearity properties have an essential role in constructing quantum codes (see~\cite{Rains99}). Thus, it is a useful context for studying non-linear complementary pairs of codes in coding theory. In particular, we study additive complementary pairs (ACP) of codes that extend the analogous of additive complementary dual (ACD) codes. With this point of view, lots of research has been done for additive complementary dual (ACD) codes (see~\cite{shi20,shi22, shi222}).
LCP of codes is a generalization of linear complementary dual (LCD) codes. The notion of LCD codes was introduced by Massy in~\cite{Mas92} and later studied in~\cite{CMT18,CMTQ18,CMTQ19,GOS18,MTQ18}.

Besides, the study of additive codes has gathered significant attention due to their theoretical significance and practical applications. One prominent application of additive codes lies in quantum error correction, where they are employed to protect quantum information from errors induced by noise or other environmental factors. Additionally, additive codes find utility in secret-sharing schemes, where they facilitate the secure distribution of confidential information among multiple parties, ensuring that only authorized subsets of participants can reconstruct the original secret~\cite{Kim17}. Additive codes represent a crucial class of codes within coding theory, introduced by Delsarte et al.~\cite{Del73}. Generally, additive codes are subgroups of the underlying abelian group. Further, Huffman~\cite{Huff13} provided an algebraic structure for additive cyclic codes over $\FF_4$. Through ongoing research and exploration, the potential of additive codes continues to expand, offering valuable insights and solutions in the realms of information theory, quantum computing, and cryptography (see~\cite{Ezerman}). Recently, Shi et al.~\cite{shi22} developed a theory of ACD codes over $\FF_4$ for trace Euclidean and Hermitian inner products. The authors introduced a nice construction of ACP codes over $\FF_4$ from binary codes in the same paper. They also established that ACD codes are potentially related to LCD codes. In the same spirit, Choi et al.~\cite{choi23} studied ACP codes over a finite field $\FF_{q^m}$ with $m\geq 2$ in a general setup of trace inner product. They also provided new construction of ACP codes with respect to trace-Euclidean, Hermitian, and Galois inner products. This general setup showed that ACD codes are related to LCD codes. In this same paper, they further computed good numerical examples of ACP codes. Motivated by these papers, we study the theory of additive complementary pair (ACP) of codes. We further derive some construction of ACP of codes. This paper establishes a relationship between the LCP of codes and the ACP of codes.

On the other hand, an $\dfrac{\FF_q[X]}{\left( X^n-\lambda\right)}$-submodule of $\dfrac{\FF_{q^m}[X]}{\left( X^n-\lambda\right)}$ is a $\lambda$-constacyclic $\FF_q$-linear additive code over $\FF_{q^m}$, where $\lambda\in \FF_q^*$ and $m\geq 2$. In paper~\cite{cao15}, the authors studied the algebraic structure of $\lambda$-constacyclic $\FF_q$-linear additive code over $\FF_{q^m}$. In the same paper, they developed a theory for self-orthogonal and self-dual negacyclic $\FF_q$-linear codes over $\FF_{q^l}$. The authors in paper~\cite{shi222} deduced a theory for additive cyclic and cyclic $\FF_2$-linear ACD codes over $\FF_4$ of odd length for the trace Euclidean and Hermitian inner product. They also provided a characterization for subfield subcodes. Inspired by these papers, we will study the constacyclic $\FF_q$-linear ACP of codes and counting of constacyclic $\FF_q$-linear ACP of codes.

%After the significant discoveries in \cite{shor95} and \cite{Stean96}, the theory of quantum error-correcting codes has seen tremendous advancement. In \cite{Cal98}, efforts shifted towards constructing quantum error-correcting codes by exploring classical self-orthogonal codes over fields like $\FF_2$ or $\FF_4$, based on specific inner product criteria. Subsequently, numerous effective binary quantum codes have been developed through the application of classical error-correcting codes (refer to \cite{Cohenl99} and \cite{Stean999}). As a consequence, we derive binary quantum codes from additive codes.

The paper is organized as follows. In Section~\ref{sec:prel}, firstly, we sketch basic definitions, notations, and results on $\FF_q$-linear additive codes over $\FF_{q^m}$, and secondly, we recall some useful results for our context from constacyclic $\FF_q$-linear additive codes over $\FF_{q^m}$. Section~\ref{sec:char} deals with the characterization of ACP of codes with respect to the general inner product on $\FF_{q^m}^n$. Some constructions of ACP of codes are presented in Section~\ref{sec:build}.
Further, we count a formula for constacyclic ACP of codes in Section~\ref{sec:consta}. Finally, the paper concludes in Section~\ref{sec:con}.

\section{Some preliminaries}\label{sec:prel}
\subsection{$\FF_q$-linear additive codes}~\label{ssec:Fqlinear}
Let $\FF_q$ and $\FF_{q^m}$ be the finite fields with cardinality $q$ and $q^m$, respectively, and characteristic $p$. A nonempty subset $\CC$ of $\FF_{q^m}^n$ (where $m > 1$) is called an $\FF_q$-linear additive code if $\CC$ is an $\FF_q$-linear subspace of $\FF_{q^m}^n$ (see for more details \cite{choi23}). Note that an $\FF_q$-linear additive code of $\FF_{q^m}^n$ is an $\FF_q$-linear code over $\FF_{q^m}$ of length $n$. Now, we define a bilinear mapping 
\begin{align}\label{eq-1.11}
\mathcal{B}~:& ~\FF_{q^m}^n\times \FF_{q^m}^n\rightarrow \FF_{q} \text{ such that} & \nonumber \\
((a_1,a_2,\ldots,a_n),(b_1,b_2,\ldots,b_n))& \mapsto \mathcal{B}((a_1,a_2,\ldots,a_n),(b_1,b_2,\ldots,b_n))=\sum\limits_{i=1}^n Tr(\mu_ia_i\pi(b_{\sigma(i)})) &
\end{align}
where $Tr : \FF_{q^m}\rightarrow \FF_q$ is the trace mapping defined by $x\mapsto x+x^q+\cdots+x^{q^{m-1}}$, $\pi$ is a field automorphism on $\FF_{q^m}$ and $\sigma~:~\{1,2,\ldots,n\}\rightarrow\{1,2,\ldots,n\}$ is a permutation with corresponding matrix $P$ such that
\[
P_{ij} = \left\{\begin{array}{ll}
1 & \textrm{if }i=\sigma( j );\\
0 & \textrm{otherwise,}
\end{array}\right.
\]
$M=diag(\mu_1,\mu_2,\ldots,\mu_n)$ is an $n\times n$ matrix over $\FF_{q^m}$ with $\mu_i\in\FF_{q^m}^*,$ and $(a_1,a_2,\ldots,a_n),(b_1,b_2,\ldots,b_n)\in \FF_{q^m}^n.$
Note that $\mathcal{B}$ is not symmetric in general. $\mathcal{B}$ satisfies the following property
\begin{enumerate}
    \item[(a)] $\mathcal{B}$ is non-degenerate;
    \item[(b)] $Rad_{L}(B)=\{\ba\in\FF_{q^m}^n~|~\mathcal{B}(\ba,\mathbf x)=0~\text{for~all}~\mathbf x\in \FF_{q^m}^n\}$ is an $\FF_q$-linear subspace of $\FF_{q^m}^n$;
    \item[(c)] $Rad_{R}(B)=\{\ba\in\FF_{q^m}^n~|~\mathcal{B}(\mathbf y,\ba)=0~\text{for~all}~\mathbf y\in \FF_{q^m}^n\}$ is an $\FF_q$-linear subspace of $\FF_{q^m}^n$.
\end{enumerate}
For an $\FF_q$-linear additive code $\CC$ over $\FF_{q^m}$, the left-dual of $\CC$ is denoted as $\CC^{\perp_L}$, defined by $$\CC^{\perp_L}=\{\ba\in\FF_{q^m}^n~|~\mathcal{B}(\ba,\mathbf c)=0~\text{for~all}~ \mathbf c\in \CC\}$$ 
and the right-dual of $\CC$ is denoted as $\CC^{\perp_R}$, defined by $$\CC^{\perp_R}=\{\ba\in\FF_{q^m}^n~|~\mathcal{B}(\mathbf c,\ba)=0~\text{for~all}~\mathbf c\in \CC\}.$$
Note that $\CC^{\perp_L}$ and $\CC^{\perp_R}$ are $\FF_q$-linear subspace of $\FF_{q^m}^n$.
It is not difficult to show that $\CC^{\perp_L}$ and $\CC^{\perp_R}$ satisfy the  conditions (for more details, we refer~\cite{choi23})
\begin{eqnarray}\label{eq}
&& \dim_{\FF_q}(\CC)+\dim_{\FF_q}(\CC^{\perp_L})=nm \label{eq-1.13a}\\
&& \dim_{\FF_q}(\CC)+\dim_{\FF_q}(\CC^{\perp_R})=nm \label{eq-1.13b}\\
&&(\CC^{\perp_L})^{\perp_R}=(\CC^{\perp_R})^{\perp_L}=\CC. \label{eq-1.13}
\end{eqnarray}
\iffalse
\begin{align}\label{eq}
  &\dim_{\FF_q}(C)+\dim_{\FF_q}(C^{\perp_L})=nm;& \\
  &\dim_{\FF_q}(C)+\dim_{\FF_q}(C^{\perp_R})=nm;& \\
 \label{eq-1.13} &(C^{\perp_L})^{\perp_R}=(C^{\perp_R})^{\perp_L}=C.& 
\end{align}\fi
\subsection{Constacyclic $\FF_q$-linear additive codes}\label{ssec:constaFqlinear}
In this subsection, we assume that $\gcd(n,p)=1$. Suppose that $\lambda \in \FF_{q}^*$ with multiplicative order $t$. Consequently, $\gcd(nt,p)=1$ as $t$ is divisor of $q-1$. Now, we spell out some notations for our contexts as follows.
\begin{itemize}
\item Set $N=nt$.
\item Denote $\CC_i^{(b)}=\{i,ib,ib^2,\ldots\}\pmod{N}$ as the $b$-cyclotomic coset containing $i$ modulo $N$ where $i\in\{0,1,\ldots,n-1\}$ and $b$ is either $q$ or $q^m$.
\item Denote the cardinality of $\CC_i^{(b)}$ as $|\CC_i^{(b)}|$.
\end{itemize}
We following lemma is presented to enrich our context.
\begin{lemma}\cite[Lemma 2.1]{cao15}
\begin{itemize}
\item[$(a)$] If $\gcd(|\CC_{i}^{(q)}|,m)=1$, then $\CC_{i}^{(q)}=C_{i}^{(q^m)}$.
\item[$(b)$] If $m$ is a factor of $|\CC^{(q)}_{i}|$, then $|\CC_{i}^{(q)}|=m|\CC_{i}^{(q^m)}|$ and $\CC_{i}^{(q)}=\CC_{i}^{(q^m)}\cup \CC_{iq}^{(q^m)}\cup \ldots \cup \CC_{iq^{m-1}}^{(q^m)}$, where $\CC_{iq^j}^{(q^m)}, 0 \leq j \leq m-1 $ are pairwise disjoint with same cardinality.
\end{itemize}   
\end{lemma}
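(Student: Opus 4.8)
The plan is to reduce both parts to a single fact about multiplicative orders. Fix $i\in\{0,1,\ldots,n-1\}$ and set $N'=N/\gcd(i,N)$. Since $\gcd(N,p)=1$ (because $\gcd(n,p)=1$ and $t\mid q-1$) and $q$ is a power of $p$, we have $\gcd(q,N)=1$, so $q$ is a unit modulo $N'$. From $iq^{a}\equiv iq^{b}\pmod N\iff q^{a}\equiv q^{b}\pmod{N'}$ we get $|\CC_i^{(q)}|=d$, where $d:=\mathrm{ord}_{N'}(q)$ is the least positive integer with $q^{d}\equiv 1\pmod{N'}$; running the same argument with $q^{m}$ in place of $q$, and noting $\gcd(iq^{j},N)=\gcd(i,N)$ for every $j$ (again because $\gcd(q,N)=1$), gives $|\CC_{iq^{j}}^{(q^m)}|=\mathrm{ord}_{N'}(q^{m})$ for all $j\ge 0$. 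The only elementary input needed is the standard identity $\mathrm{ord}_{N'}(q^{m})=d/\gcd(d,m)$, valid because $q\bmod N'$ is an element of order $d$ in a group.

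For part $(a)$, if $\gcd(d,m)=1$ then $\mathrm{ord}_{N'}(q^{m})=d$, so $|\CC_i^{(q^m)}|=d=|\CC_i^{(q)}|$. Since every power of $q^{m}$ is a power of $q$, we always have $\CC_i^{(q^m)}\subseteq\CC_i^{(q)}$, and equality of cardinalities forces $\CC_i^{(q^m)}=\CC_i^{(q)}$.

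For part $(b)$, if $m\mid d$ then $\gcd(d,m)=m$, hence $\mathrm{ord}_{N'}(q^{m})=d/m=:e$, which gives both $|\CC_i^{(q)}|=d=me=m|\CC_i^{(q^m)}|$ and $|\CC_{iq^{j}}^{(q^m)}|=e$ for every $j$. Writing each exponent $a\in\{0,1,\ldots,d-1\}$ uniquely as $a=j+mk$ with $0\le j\le m-1$ and $0\le k\le e-1$ yields $iq^{a}=(iq^{j})q^{mk}\in\CC_{iq^{j}}^{(q^m)}$, so $\CC_i^{(q)}\subseteq\bigcup_{j=0}^{m-1}\CC_{iq^{j}}^{(q^m)}$; the reverse inclusion is immediate. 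Finally, the right-hand side is a union of $m$ sets each of size $e$ whose union has size $me$, which is possible only if the $m$ sets are pairwise disjoint. I do not anticipate a real obstacle here; the only point requiring care is the bookkeeping that the modulus $N'$, and hence all the relevant orders, are unchanged when $i$ is replaced by $iq^{j}$, which is exactly where $\gcd(q,N)=1$ is used.
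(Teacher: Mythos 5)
Your proposal is correct and complete. Note that the paper itself offers no proof of this statement: it is quoted verbatim from \cite[Lemma 2.1]{cao15} and used as an imported fact, so there is nothing internal to compare against. Your reduction of both parts to the single identity $\mathrm{ord}_{N'}(q^{m})=d/\gcd(d,m)$ with $N'=N/\gcd(i,N)$ and $d=|\CC_i^{(q)}|=\mathrm{ord}_{N'}(q)$ is the standard and cleanest route: part $(a)$ follows from $\CC_i^{(q^m)}\subseteq\CC_i^{(q)}$ plus equal cardinalities, and part $(b)$ from the division $a=mk+j$ together with the counting argument $|\bigcup_j \CC_{iq^j}^{(q^m)}|=me=\sum_j|\CC_{iq^j}^{(q^m)}|$ forcing disjointness. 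The one point you rightly flag --- that $\gcd(iq^j,N)=\gcd(i,N)$ so the modulus $N'$ and hence all the orders are unchanged across the cosets $\CC_{iq^j}^{(q^m)}$ --- is exactly where $\gcd(q,N)=1$ (guaranteed here by $\gcd(n,p)=1$ and $t\mid q-1$) enters, and you have used it correctly. No gaps.
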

Let $X$ be an indeterminate over $\FF_{q^m}$. Consider the quotient ring 
\begin{align*}
\mathcal{R}^{(q)}_{n,\lambda}&=\dfrac{\FF_{q}[X]}{(X^n-\lambda)}=\bigoplus_{i=0}^{s}\dfrac{\FF_q[X]}{(p_i(X))}=\bigoplus_{i=0}^{s} \mathcal{K}_i& 
\end{align*}
where $X^n-\lambda=\prod\limits_{i=0}^{s}p_i(X)$ with each $p_i(X)$ is an irreducible polynomial in $\FF_q[X]$, corresponding $q$-cyclotomic coset $\CC^{(q)}_{1+tj}$ and $\mathcal{K}_j=\dfrac{\FF_q[X]}{(p_j(X))}$ for $0\leq j \leq s$.  
Further, consider the quotient ring 
\begin{align*}
\mathcal{R}^{(q^m)}_{n,\lambda}&=\dfrac{\FF_{q^m}[X]}{(X^n-\lambda)}=\bigoplus_{i=0}^{\rho}\dfrac{\FF_{q^m}[X]}{(M_i(X))}=\bigoplus_{i=0}^{\rho} \mathcal{I}_i& 
\end{align*}
where $X^n-\lambda=\prod\limits_{i=0}^{\rho}M_i(X)$ with each $M_i(X)$ is a polynomial in $\FF_{q^m}[X]$ and $\mathcal{I}_j=\dfrac{\FF_{q^m}[X]}{(M_j(X))}$ for $0\leq j \leq \rho$ with $\rho=r+(s-r)m=ms-(m-1)r$ and 
\[M_{i}(X) = \left\{\begin{array}{ll}
q_i(X) & \textrm{if }~0\leq i\leq r;\\
q_{r+(k-1)m+1}(X)q_{r+(k-1)m+2}(X)\cdots q_{r+km}(X) & \textrm{if}~1\leq k\leq s-r.
\end{array}\right.\]
Here each $q_i(X)$ is an irreducible polynomial in $\FF_{q^m}[X]$, corresponding $q^m$-cyclotomic coset $\CC^{(q)}_{1+tj}$  for $0\leq j \leq s$. 
Denote $\mathcal{F}_i=\oplus_{h=1}^{m}\mathcal{I}_{r+(i-1)m+h}$, for $1\leq i\leq s-r$. For more details, see~\cite[Section $2$]{cao15}.

We next recall that an $\FF_q$-linear additive code of length $n$ over $\FF_{q^m}$ is defined as an $\FF_q$-linear subspace of $\FF_{q^m}^n$. In view of the above discussion, we have the following:
\begin{prop}\label{prop-1.1}
\begin{enumerate}
\item[$1)$] Each nonempty $\lambda$-constacyclic $\FF_q$-linear additive code $\CC$ in $\FF_{q^m}^n$ can be uniquely expressed as 
\begin{equation}\label{eq-11qq}
\CC=\CC_1\bigoplus \CC_2\bigoplus\cdots\bigoplus \CC_s,
\end{equation} 
where each $\CC_i$ is a unique $\mathcal{K}_i$-subspace of $\mathcal{I}_i$, for $0\leq i \leq r$ and $\CC_{r+k}$ is a unique $\mathcal{K}_i$-subspace of $\mathcal{F}_{r+k}$ for $1\leq k \leq s-r$.  
\item[$2)$] Conversely, let $\CD_i$ be an $\mathcal{K}_i$-subspace of $\mathcal{I}_i$, for $0\leq i \leq r$ and $\CD_{r+k}$ be an $\mathcal{K}_i$-subspace of $\mathcal{F}_{r+k}$ for $1\leq k \leq s-r$. Then, the direct sum 
\begin{equation}\label{eq-21qq}
\CD=\CD_1\bigoplus \CD_2\bigoplus\cdots\bigoplus \CD_s
\end{equation}
is an $\lambda$-constacyclic $\FF_q$-linear additive code over $\FF_{q^m}$.
\end{enumerate}
\end{prop}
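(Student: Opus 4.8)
The plan is to reduce both parts to the structure theory of modules over a finite product of fields, applied to $\mathcal R^{(q^m)}_{n,\lambda}$ regarded as a module over $\mathcal R^{(q)}_{n,\lambda}$. The first step is the standard dictionary: under the identification $(c_0,\dots,c_{n-1})\mapsto c_0+c_1X+\cdots+c_{n-1}X^{n-1}$, a subset $\CC\subseteq\FF_{q^m}^n$ is a $\lambda$-constacyclic $\FF_q$-linear additive code precisely when its image in $\mathcal R^{(q^m)}_{n,\lambda}$ is an $\FF_q$-subspace closed under multiplication by $X$ (the $\lambda$-constacyclic shift); since $\FF_q$-linearity together with closure under multiplication by $X$ forces closure under multiplication by every element of $\FF_q[X]/(X^n-\lambda)$, this is the same as being an $\mathcal R^{(q)}_{n,\lambda}$-submodule of $\mathcal R^{(q^m)}_{n,\lambda}$. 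So it suffices to classify these submodules.

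Next I would bring in the idempotents. Because $\gcd(n,p)=1$, the polynomial $X^n-\lambda$ is separable over $\FF_q$, so $\mathcal R^{(q)}_{n,\lambda}=\bigoplus_{i=0}^s\mathcal K_i$ is a product of fields with primitive orthogonal idempotents $e_0,\dots,e_s$ satisfying $\sum_i e_i=1$ and $e_ie_j=\delta_{ij}e_i$. Acting by the $e_i$ on $\mathcal R^{(q^m)}_{n,\lambda}$ yields the internal decomposition $\mathcal R^{(q^m)}_{n,\lambda}=\bigoplus_{i=0}^s e_i\mathcal R^{(q^m)}_{n,\lambda}$, in which each summand $e_i\mathcal R^{(q^m)}_{n,\lambda}$ is a module over the field $\mathcal K_i=e_i\mathcal R^{(q)}_{n,\lambda}$, hence a $\mathcal K_i$-vector space. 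Here Lemma~2.1 does the bookkeeping: it records how the irreducible factor $p_i(X)$ behaves under base change to $\FF_{q^m}$ — for $0\le i\le r$ it remains irreducible and $e_i\mathcal R^{(q^m)}_{n,\lambda}=\mathcal I_i$, while for $i=r+k$, $1\le k\le s-r$, it splits into $m$ Galois-conjugate irreducibles and $e_{r+k}\mathcal R^{(q^m)}_{n,\lambda}=\mathcal F_{r+k}=\bigoplus_{h=1}^m\mathcal I_{r+(k-1)m+h}$. Thus the idempotent decomposition of $\mathcal R^{(q^m)}_{n,\lambda}$ matches the given CRT decomposition $\bigoplus_{i=0}^\rho\mathcal I_i$ term by term.

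Part~1) is then formal. Given an $\mathcal R^{(q)}_{n,\lambda}$-submodule $\CC$, set $\CC_i=e_i\CC$; then $\CC=\sum_i e_i\CC=\bigoplus_i\CC_i$, the sum being direct because it lies inside $\bigoplus_i e_i\mathcal R^{(q^m)}_{n,\lambda}$, and each $\CC_i\subseteq e_i\mathcal R^{(q^m)}_{n,\lambda}$ is stable under $\mathcal K_i=e_i\mathcal R^{(q)}_{n,\lambda}$, i.e. a $\mathcal K_i$-subspace of $\mathcal I_i$ (resp.\ of $\mathcal F_{r+k}$ in the split case). Uniqueness is automatic: any expression $\CC=\bigoplus_i\CC_i'$ with $\CC_i'\subseteq e_i\mathcal R^{(q^m)}_{n,\lambda}$ forces $\CC_j'=e_j\CC=\CC_j$. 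For part~2), conversely, if each $\CD_i$ is a $\mathcal K_i$-subspace of $e_i\mathcal R^{(q^m)}_{n,\lambda}$, then $\CD=\bigoplus_i\CD_i$ is an $\FF_q$-subspace, and for $d\in\CD_i$ one has $Xd=(Xe_i)d$ with $Xe_i\in\mathcal K_i$ and $\CD_i$ being $\mathcal K_i$-stable, so $Xd\in\CD_i$; hence $X\CD\subseteq\CD$ and, by the dictionary, $\CD$ is a $\lambda$-constacyclic $\FF_q$-linear additive code over $\FF_{q^m}$.

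The only step that needs genuine care, rather than module-theoretic formalities, is the identification $e_i\mathcal R^{(q^m)}_{n,\lambda}\cong\mathcal I_i$ or $\mathcal F_{r+k}$: one must verify that extending scalars from $\FF_q$ to $\FF_{q^m}$, i.e.\ passing from $\mathcal R^{(q)}_{n,\lambda}$ to $\FF_{q^m}\otimes_{\FF_q}\mathcal R^{(q)}_{n,\lambda}\cong\mathcal R^{(q^m)}_{n,\lambda}$, refines each idempotent $e_i$ exactly according to the factorization of $p_i$ over $\FF_{q^m}$ supplied by Lemma~2.1. Everything else is bookkeeping with the orthogonal idempotents.
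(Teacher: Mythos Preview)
Your proposal is correct and follows essentially the same approach as the paper: the paper's own proof merely invokes \cite[Lemma~3.1]{cao15} for part~1) and declares part~2) a straightforward exercise, and what you have written is precisely the standard idempotent/CRT argument underlying that lemma, made self-contained. The only difference is expository: you spell out the module-theoretic dictionary and the role of the primitive idempotents $e_i$, whereas the paper outsources this to the reference.
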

\begin{proof}
The proof of Proposition $1$ follows from~\cite[Lemma 3.1]{cao15}, and the proof of Proposition $2$ is a straightforward exercise.  
\end{proof}
\section{Characterization of ACP of codes}\label{sec:char} 
For two $\FF_q$-linear additive codes over $\FF_{q^m}$, $\CC$ and $\CD$, the pair $(\CC, \CD)$ is called an additive complementary pair (ACP) of codes if $\CC\oplus_{\FF_q} \CD=\FF_{q^m}^n$, i.e., $\CC\cap \CD=\{0\}$ and $\CC +_{\FF_q}\CD = \{\alpha \mathbf{c} + \beta \mathbf{d} | \alpha, \beta \in  \FF_q, \mathbf{c} \in \CC, \mathbf{d} \in \CD \} = \FF_{q^m}^n$. Equivalently, a pair $(\CC,\CD)$ is an ACP of codes if and only if $\CC\cap \CD=\{0\}$ and $\dim_{\FF_q}(\CC)+\dim_{\FF_q}(\CD)=nm$. Then, we have the following results.
\begin{lemma}\label{lm-1.1}
Let $\CC$ and $\CD$ be two $\FF_q$-linear subspaces of $\FF_{q^m}^n$. With respect to the inner product $B$ (see Equation~\eqref{eq-1.11}), we have 
 \begin{enumerate}
\item[$1)$] $(\CC+\CD)^{\perp_L}=\CC^{\perp_L}\cap \CD^{\perp_L}$;
\item[$2)$] $(\CC+\CD)^{\perp_R}=\CC^{\perp_R}\cap \CD^{\perp_R}$;
\item[$3)$] $\CC^{\perp_L}+\CD^{\perp_L}=(\CC\cap \CD)^{\perp_L}$;
\item[$4)$] $\CC^{\perp_R}+\CD^{\perp_R}=(\CC\cap \CD)^{\perp_R}$.
 \end{enumerate}
\end{lemma}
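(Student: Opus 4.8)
The plan is to prove all four identities by the standard double-inclusion technique, exploiting the non-degeneracy of $\mathcal{B}$ together with the dimension formulas \eqref{eq-1.13a}, \eqref{eq-1.13b} and the biduality relation \eqref{eq-1.13}. Parts (1) and (2) are the ``easy'' direction and require no dimension count: for (1), an element $\ba$ lies in $(\CC+\CD)^{\perp_L}$ iff $\mathcal{B}(\ba,\mathbf{c}+\mathbf{d})=0$ for all $\mathbf{c}\in\CC$, $\mathbf{d}\in\CD$; taking $\mathbf{d}=0$ and $\mathbf{c}=0$ separately shows this is equivalent to $\mathcal{B}(\ba,\mathbf{c})=0$ for all $\mathbf{c}\in\CC$ and $\mathcal{B}(\ba,\mathbf{d})=0$ for all $\mathbf{d}\in\CD$, i.e.\ $\ba\in\CC^{\perp_L}\cap\CD^{\perp_L}$; here one uses that $\CC$ and $\CD$ are $\FF_q$-subspaces so $\CC+\CD$ is exactly the set of such sums. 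Part (2) is identical with the arguments of $\mathcal{B}$ swapped.

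For part (3), first I would observe the inclusion $\CC^{\perp_L}+\CD^{\perp_L}\subseteq(\CC\cap\CD)^{\perp_L}$, which is immediate: if $\ba\in\CC^{\perp_L}$ and $\mathbf{b}\in\CD^{\perp_L}$, then for any $\mathbf{x}\in\CC\cap\CD$ we have $\mathcal{B}(\ba+\mathbf{b},\mathbf{x})=\mathcal{B}(\ba,\mathbf{x})+\mathcal{B}(\mathbf{b},\mathbf{x})=0+0=0$. For the reverse inclusion I would compare $\FF_q$-dimensions. Using \eqref{eq-1.13a}, the standard inclusion-exclusion identity $\dim_{\FF_q}(\CC^{\perp_L}+\CD^{\perp_L})=\dim_{\FF_q}(\CC^{\perp_L})+\dim_{\FF_q}(\CD^{\perp_L})-\dim_{\FF_q}(\CC^{\perp_L}\cap\CD^{\perp_L})$, and part (1) applied to rewrite $\CC^{\perp_L}\cap\CD^{\perp_L}=(\CC+\CD)^{\perp_L}$, I get
\[
\dim_{\FF_q}(\CC^{\perp_L}+\CD^{\perp_L})=\bigl(nm-\dim_{\FF_q}\CC\bigr)+\bigl(nm-\dim_{\FF_q}\CD\bigr)-\bigl(nm-\dim_{\FF_q}(\CC+\CD)\bigr).
\]
Since $\dim_{\FF_q}(\CC+\CD)=\dim_{\FF_q}\CC+\dim_{\FF_q}\CD-\dim_{\FF_q}(\CC\cap\CD)$, this simplifies to $nm-\dim_{\FF_q}(\CC\cap\CD)$, which by \eqref{eq-1.13a} equals $\dim_{\FF_q}\bigl((\CC\cap\CD)^{\perp_L}\bigr)$. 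Combined with the inclusion already established, this forces equality. Part (4) follows by the same computation with $\perp_L$ replaced by $\perp_R$ throughout and using \eqref{eq-1.13b} in place of \eqref{eq-1.13a}.

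I do not expect any serious obstacle here; the only points requiring a little care are (i) making sure the set-sum $\CC+\CD$ over $\FF_q$ really is closed and equals the span, which is granted since both are $\FF_q$-subspaces, and (ii) checking that the dimension bookkeeping for the $\perp_R$ case genuinely closes up — it does, because \eqref{eq-1.13b} is the exact right-dual analogue of \eqref{eq-1.13a}. One could alternatively derive (3) and (4) from (1) and (2) purely formally by applying (1)/(2) to the pair $(\CC^{\perp_L},\CD^{\perp_L})$ and then taking $\perp_R$ of both sides, invoking \eqref{eq-1.13}; I would mention this as a cleaner alternative but the dimension argument is self-contained and is what I would write out.
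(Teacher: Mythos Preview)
Your proof is correct. Parts (1) and (2) match the paper's argument essentially word for word. For (3) and (4) you diverge: you establish the easy inclusion $\CC^{\perp_L}+\CD^{\perp_L}\subseteq(\CC\cap\CD)^{\perp_L}$ and then force equality by a dimension count via \eqref{eq-1.13a} and inclusion--exclusion, whereas the paper uses the biduality relation \eqref{eq-1.13} directly: it writes $\CC^{\perp_L}+\CD^{\perp_L}=\bigl((\CC^{\perp_L}+\CD^{\perp_L})^{\perp_R}\bigr)^{\perp_L}$, applies part (2) to the inner $\perp_R$ to get $(\CC^{\perp_L})^{\perp_R}\cap(\CD^{\perp_L})^{\perp_R}=\CC\cap\CD$, and concludes. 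This is exactly the ``cleaner alternative'' you mention at the end of your proposal. The paper's route is a one-line formal manipulation and avoids any arithmetic; your dimension argument is slightly longer but has the minor virtue of using only the dimension identity \eqref{eq-1.13a} rather than the full double-dual statement \eqref{eq-1.13}. Either is perfectly acceptable.
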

\begin{proof}
\begin{enumerate}
\item[$1)$] Let $x\in (\CC+\CD)^{\perp_L}$. Then 
$\mathcal{B}(\mathbf x,\ba)=0 \text{ for all } \ba\in \CC+\CD \implies  \mathcal{B}(\mathbf x,\mathbf{c+d})=0 \text{ for all } \mathbf c\in \CC, \mathbf d\in \CD.$
If $\mathbf d = 0$ then $\mathcal{B}(\mathbf x,\mathbf c)=0~\text{for all}~\mathbf c\in \CC\implies \mathbf x\in \CC^{\perp_L}$. Similarly If $\mathbf c = 0$ then $\mathcal{B}(\mathbf x,\mathbf d)=0~\text{for all}~\mathbf d\in \CD,\implies \mathbf x\in \CD^{\perp_L}$. Hence, $\mathbf x\in \CC^{\perp_L}\cap \CD^{\perp_L}.$

For the other direction, let $\mathbf y \in  \CC^{\perp_L}\cap \CD^{\perp_L}$. 
Then $\mathcal{B}(\mathbf y,\mathbf c)=0~\text{for~all}~\mathbf c\in \CC ~\text{and}~\mathcal{B}(\mathbf y,\mathbf d)=0~\text{for~all}~\mathbf d\in \CD$.  That implies, $\mathcal{B}(\mathbf y,\mathbf{c+d})=\mathcal{B}(\mathbf y,\mathbf c)+\mathcal{B}(\mathbf y,\mathbf d)=0~\text{for~all}~\mathbf c\in \CC,~\mathbf d\in \CD$. Hence, $\mathbf y\in (\CC+\CD)^{\perp_L}$.
\item[$2)$] It can be proved as in $1)$.
\item[$3)$] As $\CC^{\perp_L}+\CD^{\perp_L}=\left( (\CC^{\perp_L}+\CD^{\perp_L})^{\perp_R} \right)^{\perp_L}$ (see Equation~\eqref{eq-1.13}), $\CC^{\perp_L}+\CD^{\perp_L} = (\CC\cap \CD)^{\perp_L}$ (follows from 2).
\item[$4)$]   It can be proved as in $3)$.
\end{enumerate}
\end{proof}
\begin{thm}\label{th-1.1}
Let $\CC$ and $\CD$ be two $\FF_q$-linear subspaces of $\FF_{q^m}^n$. With respect to the inner product $\mathcal{B}$ (see Equation \eqref{eq-1.11}), the following are equivalent.
\begin{enumerate}
\item[$1)$] the pair $(\CC,\CD)$ is an ACP of codes;
\item[$2)$] the pair $(\CC^{\perp_L},\CD^{\perp_L})$ is an ACP of codes;
\item[$3)$] the pair $(\CC^{\perp_R},\CD^{\perp_R})$ is an ACP of codes.
\end{enumerate}   
\end{thm}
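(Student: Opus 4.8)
The plan is to prove $1)\Leftrightarrow 2)$ in full; the equivalence $1)\Leftrightarrow 3)$ then follows by repeating the argument verbatim with every left-dual $\perp_L$ replaced by the corresponding right-dual $\perp_R$ (using parts $2)$ and $4)$ of Lemma~\ref{lm-1.1} and Equation~\eqref{eq-1.13b} in place of parts $1)$, $3)$ and Equation~\eqref{eq-1.13a}), and $2)\Leftrightarrow 3)$ is then immediate by transitivity. The workhorses are the four identities in Lemma~\ref{lm-1.1}, the non-degeneracy of $\mathcal{B}$ (which gives $(\FF_{q^m}^n)^{\perp_L}=(\FF_{q^m}^n)^{\perp_R}=\{0\}$, equivalently $\{0\}^{\perp_L}=\{0\}^{\perp_R}=\FF_{q^m}^n$), and the involution property $(\CC^{\perp_L})^{\perp_R}=(\CC^{\perp_R})^{\perp_L}=\CC$ from Equation~\eqref{eq-1.13}.

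First I would record the reformulation used throughout: by definition $(\CC,\CD)$ is an ACP of codes exactly when $\CC\cap\CD=\{0\}$ and $\CC+_{\FF_q}\CD=\FF_{q^m}^n$. Then I translate each of these two conditions across $\perp_L$. For the sum condition, $\CC+\CD=\FF_{q^m}^n$ holds iff $(\CC+\CD)^{\perp_L}=\{0\}$ (the forward direction is non-degeneracy; the converse applies $\perp_R$ to both sides and invokes $((\CC+\CD)^{\perp_L})^{\perp_R}=\CC+\CD$ from Equation~\eqref{eq-1.13}), and by part $1)$ of Lemma~\ref{lm-1.1} this is the same as $\CC^{\perp_L}\cap\CD^{\perp_L}=\{0\}$. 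For the intersection condition, $\CC\cap\CD=\{0\}$ holds iff $(\CC\cap\CD)^{\perp_L}=\FF_{q^m}^n$ (again the converse uses $((\CC\cap\CD)^{\perp_L})^{\perp_R}=\CC\cap\CD$ together with $\{0\}^{\perp_R}=\FF_{q^m}^n$), and by part $3)$ of Lemma~\ref{lm-1.1} this is the same as $\CC^{\perp_L}+\CD^{\perp_L}=\FF_{q^m}^n$. Combining the two equivalences, $(\CC,\CD)$ is an ACP of codes iff $\CC^{\perp_L}\cap\CD^{\perp_L}=\{0\}$ and $\CC^{\perp_L}+\CD^{\perp_L}=\FF_{q^m}^n$, i.e. iff $(\CC^{\perp_L},\CD^{\perp_L})$ is an ACP of codes.

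There is no deep obstacle; the only point requiring care is the asymmetry of $\mathcal{B}$ — one must not conflate $\perp_L$ with $\perp_R$, and in particular the cancellation steps must pass through the mixed identity in Equation~\eqref{eq-1.13} (to undo a $\perp_L$ one applies $\perp_R$, not $\perp_L$). As an alternative to handling the sum condition directly, one could instead carry the dimension count: from Equation~\eqref{eq-1.13a} one gets $\dim_{\FF_q}(\CC^{\perp_L})+\dim_{\FF_q}(\CD^{\perp_L})=2nm-\bigl(\dim_{\FF_q}(\CC)+\dim_{\FF_q}(\CD)\bigr)$, so $\dim_{\FF_q}(\CC)+\dim_{\FF_q}(\CD)=nm$ iff $\dim_{\FF_q}(\CC^{\perp_L})+\dim_{\FF_q}(\CD^{\perp_L})=nm$; combined with the transfer of $\CC\cap\CD=\{0\}$ to $\CC^{\perp_L}+\CD^{\perp_L}=\FF_{q^m}^n$ this yields the same conclusion. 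Either route closes $1)\Leftrightarrow 2)$, and the right-dual version is identical.
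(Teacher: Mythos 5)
Your proposal is correct and follows essentially the same route as the paper's own proof: both reduce the statement to Lemma~\ref{lm-1.1} together with the duality identities in Equations~\eqref{eq-1.13a}--\eqref{eq-1.13}, translating the conditions $\CC\cap\CD=\{0\}$ and $\CC+_{\FF_q}\CD=\FF_{q^m}^n$ across $\perp_L$. If anything, your write-up is slightly more complete than the paper's, since you make the converse direction explicit via the involution $(\,\cdot\,^{\perp_L})^{\perp_R}=\mathrm{id}$ and note that $1)\Leftrightarrow 3)$ requires the separate (symmetric) right-dual argument rather than following formally from $1)\Leftrightarrow 2)$ alone.
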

\begin{proof}
 In order to prove that $1)$, $2)$ and $3)$ are equivalent, it is enough to show that  $1)$ and $2)$ are equivalent.
 Let us assume that the pair $(\CC,\CD)$ is an ACP of codes.
 Then $\CC\cap\CD=\{0\}$ and $\dim_{\FF_q}(\CC)+\dim_{\FF_q}(\CD)=nm$. That is, $\CC+_{\FF_{q}}\CD = \FF_{q^m}^n$. Then using result in $1)$ of Lemma~\ref{lm-1.1}, we have $\CC^{\perp_L}\cap \CD^{\perp_L}=\{0\}$ and since $\dim_{\FF_q}(\CC)+\dim_{\FF_q}(\CD)=nm$ from result in Equation~\eqref{eq-1.13a}, we have $\dim_{\FF_q}(\CC^{\perp_L})+\dim_{\FF_q}(\CD^{\perp_L})=nm$.
 Hence, $(\CC^{\perp_L},\CD^{\perp_L})$ is an ACP of codes.

 Conversely, suppose that the pair $(\CC^{\perp_L},\CD^{\perp_L})$ is an ACP of codes, which gives that $\CC^{\perp_L}\cap \CD^{\perp_L}=\{0\}$ and $\CC^{\perp_L}+_{\FF_q} \CD^{\perp_L}=\FF_{q^m}^n$. From Lemma~\ref{lm-1.1}, it follows that $\CC+_{\FF_q}\CD=\FF_{q^m}^n$ and $\CC\cap \CD=\{0\}$. Hence, $(\CC,\CD)$ is an ACP of codes.
\end{proof}
For a matrix $G = (g_{ij})$ with entries $g_{ij} \in \FF_{q^m}$, we denote the matrix $Tr(G) = (Tr(g_{ij}))$ over $\FF_q$.
The following theorem presents a characterization of an ACP of codes. 
\begin{thm}\label{th-1.2}
If $\CC$ and $\CD$ are two $\FF_q$-linear additive codes over $\FF_{q^m}$ of length $n$ with generator matrices $G_1$ and $G_2$, respectively. If the pair $(\CC,\CD)$ is an ACP of codes, then 
$rank\left(Tr\left( \begin{array}{cc}
M\left(\pi(G_1)P\right)^\top  \\
M\left(\pi(G_2)P\right)^\top 
\end{array}\right)\right)=n$,
where $\pi$ is a field automorphism on $\FF_{q^m}$ and $\sigma~:~\{1,2,\ldots,n\}\rightarrow\{1,2,\ldots,n\}$ is a permutation with corresponding matrix $P$ such that
\[P_{ij} = \left\{\begin{array}{ll}
1 & \textrm{if }i=\sigma( j );\\
0 & \textrm{otherwise,}
\end{array}\right.\]
$M=diag(\mu_1,\mu_2,\ldots,\mu_n)$ is an $n\times n$ matrix over $\FF_{q^m}$ with $\mu_i\in\FF_{q^m}^*$. 
\end{thm}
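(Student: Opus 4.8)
The plan is to translate the condition that $(\CC,\CD)$ is an ACP of codes into a rank statement about an $\FF_q$-matrix built from $G_1$ and $G_2$. Recall that $(\CC,\CD)$ is an ACP precisely when $\CC \cap \CD = \{0\}$ and $\dim_{\FF_q}(\CC)+\dim_{\FF_q}(\CD)=nm$, i.e. $\CC \oplus_{\FF_q} \CD = \FF_{q^m}^n$. First I would unwind the bilinear form $\mathcal{B}$ of Equation~\eqref{eq-1.11}: for $\ba,\mathbf b\in\FF_{q^m}^n$ one has $\mathcal{B}(\ba,\mathbf b)=\sum_i Tr(\mu_i a_i \pi(b_{\sigma(i)}))=Tr\big(\ba\, M\,(\pi(\mathbf b)P)^\top\big)$, where I use that $(\pi(\mathbf b)P)^\top$ has $i$-th entry $\pi(b_{\sigma(i)})$ and that $Tr$ applied entrywise commutes with the $\FF_q$-linear operation of taking this sum. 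So the matrix $H := Tr\!\left(M(\pi(G_1)P)^\top \;;\; M(\pi(G_2)P)^\top\right)$ stacked vertically — with $\dim_{\FF_q}\CC$ columns coming from $G_1$ and $\dim_{\FF_q}\CD$ columns from $G_2$ — is, up to transpose, exactly the matrix of the pairing $\mathcal{B}$ restricted so that its kernel on the right is $\CC^{\perp_R}\cap\CD^{\perp_R}=(\CC+\CD)^{\perp_R}$ by part~$2)$ of Lemma~\ref{lm-1.1}.

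The key chain of equalities I would then run is: $(\CC,\CD)$ is an ACP $\Longrightarrow \CC+_{\FF_q}\CD=\FF_{q^m}^n \Longrightarrow (\CC+\CD)^{\perp_R}=(\FF_{q^m}^n)^{\perp_R}=\{0\}$ by non-degeneracy of $\mathcal{B}$ and Equation~\eqref{eq-1.13b}. By Lemma~\ref{lm-1.1}$(2)$, $(\CC+\CD)^{\perp_R}=\CC^{\perp_R}\cap\CD^{\perp_R}$, and this common kernel is precisely the set of $\mathbf x\in\FF_{q^m}^n$ killed on the left of $\mathcal{B}$ by every row of $G_1$ and every row of $G_2$; viewing $\FF_{q^m}^n\cong\FF_q^{nm}$, this is the right kernel (over $\FF_q$, after the standard $\FF_q$-basis expansion of coordinates) of the matrix whose rows are $Tr(\mu_i (G_1)_{ki}\pi(\cdot))$-type functionals — that is, the right kernel of $H$. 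Hence $H$ having trivial kernel over $\FF_q$ forces its column rank, and therefore its rank, to be $n$ once one checks that the ambient space has the right $\FF_q$-dimension after accounting for the trace. Concretely: $H$ has $nm$-many $\FF_q$-columns grouped as the coordinate expansion of $n$ columns over $\FF_{q^m}$, its kernel is $\{0\}$, so $\operatorname{rank} H = n$ (the number of $\FF_{q^m}$-columns, since the trace pairing is non-degenerate coordinatewise). I would make the bookkeeping precise by fixing an $\FF_q$-basis $\{\beta_1,\dots,\beta_m\}$ of $\FF_{q^m}$, writing each $x_i=\sum_j x_{ij}\beta_j$, and observing that $\mathcal{B}(\mathbf x,\cdot)=0$ as an $\FF_q$-functional in the $x_{ij}$ corresponds exactly to $H\mathbf x=0$ in the $nm$ unknowns $x_{ij}$; non-degeneracy of the trace form $(\alpha,\gamma)\mapsto Tr(\alpha\gamma)$ on $\FF_{q^m}$ is what pins the rank at $n$ rather than something smaller.

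The main obstacle I anticipate is purely notational rather than conceptual: keeping straight the several transposes and the action of $P$ and $\pi$ so that the ``matrix of $\mathcal{B}$'' is literally the transpose of the stacked matrix in the statement, and making the $\FF_{q^m}$-to-$\FF_q$ coordinate expansion rigorous without drowning in indices. In particular I need to be careful that $\operatorname{rank}$ in the statement is the $\FF_q$-rank of a matrix with entries $Tr(\cdots)\in\FF_q$ whose \emph{number of columns} is $n$ (the codimension count), so that ``$\operatorname{rank}=n$'' is equivalent to ``columns are $\FF_q$-linearly independent'' is equivalent to ``trivial right kernel'' — and then the argument above delivers exactly that. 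I would also note in passing that the converse fails in general (the stated theorem is only one direction), because $\operatorname{rank}=n$ certifies $\CC+_{\FF_q}\CD=\FF_{q^m}^n$ but says nothing forcing the dimension count $\dim_{\FF_q}\CC+\dim_{\FF_q}\CD=nm$ needed for the sum to be direct; so no effort should be spent trying to reverse the implication here.
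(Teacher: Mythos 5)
Your proposal is correct and is essentially the paper's own argument: both identify a nonzero $\FF_q$-kernel vector of the stacked trace matrix with a nonzero element of $\CC^{\perp_L}\cap\CD^{\perp_L}$ and then invoke the ACP--duality result (your route goes directly through Lemma~\ref{lm-1.1} and non-degeneracy, the paper's goes by contradiction through Theorem~\ref{th-1.1}, which is the same content). The only blemishes are cosmetic: you label the relevant dual as $\perp_R$ where the paper's convention makes it $\perp_L$, and your aside about ``$nm$ columns after coordinate expansion'' overstates what the $n$-column matrix in the statement actually detects --- neither affects the validity of the direction being proved.
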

\begin{proof}
Let assume that 
$rank\left(Tr\left( \begin{array}{cc}
M\left(\pi(G_1)P\right)^\top  \\
M\left(\pi(G_2)P\right)^\top 
\end{array}\right)\right) < n$.
Then there exists nonzero $\mathbf x \in\FF_{q}^n$ such that
$\left(Tr\left( \begin{array}{cc}
M\left(\pi(G_1)P\right)^\top  \\
M\left(\pi(G_2)P\right)^\top 
\end{array}\right)\right)\mathbf x^\top = 0$. 
Then 
$\left(Tr\left( \begin{array}{cc}
M\left(\pi(G_1)P\right)^\top  \\
\end{array}\right)\right)\mathbf x^\top 
= \left(Tr\left( \begin{array}{cc}
M\left(\pi(G_2)P\right)^\top 
 \end{array}\right)\right)\mathbf x^\top=0$.
 That implies $\mathcal{B}(\mathbf x,\mathbf c)=0$ for all $\mathbf c\in \CC$ and $\mathcal{B}(\mathbf x, \mathbf d)=0$ for all $\mathbf d\in \CD$, that is $\mathbf x\in \CC^{\perp_L}$ and $\mathbf x\in \CD^{\perp_L}$. Hence, $\mathbf x(\neq 0)\in \CC^{\perp_L}\cap \CD^{\perp_L}$. Then $(\CC^{\perp_L},\CD^{\perp_L})$ is not an ACP of codes and that implies $(\CC,\CD)$ is not an ACP of codes (see Theorem \ref{th-1.1}). This contradicts the hypothesis that $(\CC,\CD)$ is an ACP of codes. Hence $rank\left(Tr\left( \begin{array}{cc}
      M\left(\pi(G_1)P\right)^\top  \\
      M\left(\pi(G_2)P\right)^\top  \\
 \end{array}\right)\right) = n$.
\end{proof}
\begin{example}
Take $\FF_q=\FF_2$ and $\FF_{q^m}=\FF_{4}$ with $\pi$ is the automorphism on $\FF_{q^m}$, define as $\pi(x)=x^2$. Let $\CC$ and $\CD$ be two $\FF_2$-linear additive codes over $\FF_4$ of length $6$ with generator matrices
$$G_1=\left(\begin{array}{cccccc}
1 & 0& 0 & 0& 0 & 1 \\
\omega & 0& 0 & 0& 0 &\omega \\
0 & 1& 0 & 0& 0 & 1 \\
0 & \omega & 0 & 0& 0 &\omega \\
0 & 0& 1 & 0& 0 & 1 \\
0 & 0 & \omega & 0& 0 &\omega \\
0 & 0& 0 & 1& 0 & 1 \\
0 & 0 & 0 & \omega & 0 &\omega \\
 \end{array}\right) \text{ and } G_2=\left(\begin{array}{cccccc}
0 & 0& 0 & 0& 1 & 0 \\
0 & 0& 0 & 0& \omega & 0 \\
1 & 1& 1 & 1& 0 & 1 \\
\omega & \omega & \omega & \omega& 0 &\omega \\
\end{array}\right),$$
respectively, where $\omega$ is a primitive root of $\FF_4$. 
It is easy to see that $(\CC,\CD)$ is an ACP of codes with respect to $P$ such that
\[ P_{ij} = \left\{\begin{array}{ll}
1 & \textrm{if~ off-diagonal element };\\
0 & \textrm{otherwise,}
\end{array}\right. \]
$M=diag(\mu_1,\mu_2,\ldots,\mu_n)$ is an $n\times n$ matrix over $\FF_{4}$ with $\mu_i\in\FF_{4}^*$. However, 
$rank\left(Tr\left( \begin{array}{cc}
M\left(\pi(G_1)P\right)^\top  \\
M\left(\pi(G_2)P\right)^\top
\end{array}\right)\right) = 6$, which is coincide with Theorem~\ref{th-1.2}.
\end{example}

We can have similar results in terms of parity check matrices as in the following Corollary.
\begin{cor}
If $\CC$ and $\CD$ are two $\FF_q$-linear additive codes over $\FF_{q^m}$of length $n$ with parity check matrices $H_1$ and $H_2$, respectively. If the pair $(\CC,\CD)$ is an ACP of codes then 
$rank\left(Tr\left( \begin{array}{cc}
M\left(\pi(H_1)P\right)^\top  \\
M\left(\pi(H_2)P\right)^\top
\end{array}\right)\right)=n$, where $\pi$ is a field automorphism on $\FF_{q^m}$ and $\sigma~:~\{1,2,\ldots,n\}\rightarrow\{1,2,\ldots,n\}$ is a permutation with corresponding matrix $P$ such that
\[P_{ij} = \left\{\begin{array}{ll}
1 & \textrm{if }i=\sigma( j );\\
0 & \textrm{otherwise,}
\end{array}\right.\]
$M=diag(\mu_1,\mu_2,\ldots,\mu_n)$ is an $n\times n$ matrix over $\FF_{q^m}$ with $\mu_i\in\FF_{q^m}^*$.  
\end{cor}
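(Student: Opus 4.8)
The plan is to reduce the Corollary to Theorem~\ref{th-1.2} by exploiting the relationship between a code and its dual under the bilinear form $\mathcal{B}$. First I would observe that if $H_1$ is a parity check matrix of $\CC$ and $H_2$ is a parity check matrix of $\CD$, then $H_1$ is a generator matrix of the dual code $\CC^{\perp_L}$ (or $\CC^{\perp_R}$, depending on the side convention adopted for parity checks in this setup) and likewise $H_2$ is a generator matrix of $\CD^{\perp_L}$. Concretely, the rows of $H_1$ span $\CC^{\perp_L}$ as an $\FF_q$-subspace of $\FF_{q^m}^n$, by the very definition of the parity check matrix with respect to $\mathcal{B}$ together with the dimension relation~\eqref{eq-1.13a}.

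Next I would invoke Theorem~\ref{th-1.1}: since $(\CC,\CD)$ is an ACP of codes, the pair $(\CC^{\perp_L},\CD^{\perp_L})$ is also an ACP of codes. Now apply Theorem~\ref{th-1.2} directly to the pair $(\CC^{\perp_L},\CD^{\perp_L})$, whose generator matrices are exactly $H_1$ and $H_2$. Theorem~\ref{th-1.2} then yields
\[
rank\left(Tr\left( \begin{array}{cc}
M\left(\pi(H_1)P\right)^\top  \\
M\left(\pi(H_2)P\right)^\top
\end{array}\right)\right)=n,
\]
which is the assertion of the Corollary. In writing this up I would state the argument in two or three sentences: $(\CC,\CD)$ ACP $\Rightarrow$ $(\CC^{\perp_L},\CD^{\perp_L})$ ACP by Theorem~\ref{th-1.1}; $H_1, H_2$ generate $\CC^{\perp_L}, \CD^{\perp_L}$; apply Theorem~\ref{th-1.2}.

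The main obstacle, such as it is, is purely bookkeeping: pinning down precisely which dual ($\perp_L$ or $\perp_R$) a parity check matrix generates under the non-symmetric form $\mathcal{B}$, and confirming that the same matrices $M$, $P$ and automorphism $\pi$ appearing in the statement of Theorem~\ref{th-1.2} are the ones that make this work. Since Theorem~\ref{th-1.1} shows all three of $(\CC,\CD)$, $(\CC^{\perp_L},\CD^{\perp_L})$, $(\CC^{\perp_R},\CD^{\perp_R})$ are simultaneously ACP, either convention leads to the same conclusion, so no genuine difficulty arises. One could alternatively give a self-contained proof mimicking Theorem~\ref{th-1.2}: assume the rank is less than $n$, extract a nonzero $\mathbf{x}\in\FF_q^n$ in the kernel, deduce $\mathbf{x}\in(\CC^{\perp_L})^{\perp_L}\cap(\CD^{\perp_L})^{\perp_L}$, i.e. (using~\eqref{eq-1.13} with the appropriate sides) $\mathbf{x}$ lies in a common nonzero intersection forcing the dual pair — hence by Theorem~\ref{th-1.1} the original pair — to fail to be ACP, a contradiction. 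I would present the short proof via Theorem~\ref{th-1.1} and Theorem~\ref{th-1.2} as the cleaner option.
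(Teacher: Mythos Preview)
Your proposal is correct and matches the paper's treatment: the paper states the Corollary without an explicit proof, prefacing it only with ``We can have similar results in terms of parity check matrices,'' which is precisely your route---use Theorem~\ref{th-1.1} to pass from $(\CC,\CD)$ to the dual pair, observe that $H_1,H_2$ generate the duals, and apply Theorem~\ref{th-1.2}. Your remark that the $\perp_L$/$\perp_R$ bookkeeping is immaterial thanks to Theorem~\ref{th-1.1} is exactly the right way to dispose of that wrinkle.
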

In general, the converse of Theorem~\ref{th-1.2} is not correct. The following example illustrates it.
\begin{example}
Take $\FF_q=\FF_2$ and $\FF_{q^m}=\FF_{4}$ with $\pi$ is identity automorphism on $\FF_{q^m}$. Let $\CC$ and $\CD$ be two $\FF_2$-linear additive codes over $\FF_4$ of length $3$ with generator matrices
$$G_1=\left(\begin{array}{ccc}
1 & 1& 0 \\
\omega & \omega & 0 \\
\omega^2 & 0 &\omega^2 
\end{array}\right) \text{ and } G_1=\left(\begin{array}{ccc}
1 & 1& 1 \\
\omega & \omega & \omega \\
\omega & \omega & 0
\end{array}\right),$$ respectively, where $\omega$ is a primitive root of $\FF_4$. 
It is easy to see that $C\cap D\neq \{0\}$, as $(\omega,~\omega,~0)\in C\cap D$. Hence $(\CC,\CD)$ is not an ACP of codes. However, $rank\left(Tr\left(\begin{array}{cc}
G_1  \\ G_2 
\end{array}\right)\right)=3$.
\end{example}
Now, we present a necessary and sufficient condition for an ACP of codes. 
\begin{thm}\label{theorem-11q}
Let $\CC$ and $\CD$ be two $\FF_q$-linear additive codes over $\FF_{q^m}$ of length $n$ with generator matrices $G_1$ and $G_2$ and parity check matrices $H_1$ and $H_2$, respectively. Assume that $\dim_{\FF_q}(\CC)+\dim_{\FF_q}(\CD)=mn$. Then the pair $(\CC,\CD)$ is an ACP of codes if and only if
$rank\left(Tr\left(H_2M\left(\pi\left(G_1\right)P\right)^{\top}\right)\right) = rank(G_1)$ and
$rank\left(Tr\left(H_1M\left(\pi\left(G_2\right)P\right)^{\top}\right)\right) = rank(G_2)$ where $M$ and $P$ are defined as in Theorem~\ref{th-1.2}.
\end{thm}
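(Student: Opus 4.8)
The plan is to translate the two defining conditions of an ACP of codes — namely $\CC\cap\CD=\{0\}$ and $\dim_{\FF_q}(\CC)+\dim_{\FF_q}(\CD)=mn$ — into rank conditions on matrices over $\FF_q$, using the fact that $\mathbf{d}\in\CC^{\perp_R}$ (equivalently $\mathcal{B}(\mathbf{c},\mathbf{d})=0$ for all $\mathbf{c}\in\CC$) can be detected via the trace form. Since the dimension hypothesis is already assumed, the pair $(\CC,\CD)$ is an ACP of codes if and only if $\CC\cap\CD=\{0\}$, and by a dimension count this is equivalent to $\CC+\CD=\FF_{q^m}^n$, i.e. $\CC^{\perp_L}\cap\CD^{\perp_L}=\{0\}$ (via Lemma~\ref{lm-1.1}). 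The rows of $H_2$ span $\CD^{\perp_R}$, and for a row $\mathbf{h}$ of $H_2$ and a row $\mathbf{g}$ of $G_1$ the entry of $Tr\!\left(H_2M(\pi(G_1)P)^{\top}\right)$ is exactly $\mathcal{B}(\mathbf{h},\mathbf{g})$ read off from Equation~\eqref{eq-1.11}, after checking that $M(\pi(G_1)P)^{\top}$ has $(i,j)$-entry $\sum_k \mu_k\,\pi(g_{jk})\,P_{ki}$ — I would verify this index bookkeeping carefully, as it is the one place where the permutation $\sigma$ and the diagonal $M$ interact with the transpose.

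First I would show that the matrix $Tr\!\left(H_2 M(\pi(G_1)P)^{\top}\right)$ vanishes if and only if every row of $G_1$ lies in $(\CD^{\perp_R})^{\perp_L}=\CD$ (using Equation~\eqref{eq-1.13}), i.e. if and only if $\CC\subseteq\CD$; more generally its rank measures how much of $\CC$ fails to lie in $\CD^{\perp_R}{}^{\perp_L}=\CD$. Precisely, the kernel of the linear map $\mathbf{c}\mapsto \big(\mathcal{B}(\mathbf{h}_1,\mathbf{c}),\dots\big)$ restricted to $\CC$ is $\CC\cap(\CD^{\perp_R})^{\perp_L}=\CC\cap\CD$, so the rank of $Tr\!\left(H_2M(\pi(G_1)P)^{\top}\right)$, viewed as a map from the $\FF_q$-span of the rows of $G_1$, equals $\dim_{\FF_q}(\CC)-\dim_{\FF_q}(\CC\cap\CD)=rank(G_1)-\dim_{\FF_q}(\CC\cap\CD)$. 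Hence $rank\!\left(Tr\!\left(H_2M(\pi(G_1)P)^{\top}\right)\right)=rank(G_1)$ if and only if $\CC\cap\CD=\{0\}$. The same argument with the roles of $\CC$ and $\CD$ swapped, using that the rows of $H_1$ span $\CC^{\perp_R}$, gives $rank\!\left(Tr\!\left(H_1M(\pi(G_2)P)^{\top}\right)\right)=rank(G_2)$ if and only if $\CD\cap\CC=\{0\}$ — the same condition. So either rank equality already forces $\CC\cap\CD=\{0\}$; combined with the dimension hypothesis, this yields the ACP property, and conversely an ACP pair satisfies both rank equalities.

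For the forward direction I would argue: if $(\CC,\CD)$ is an ACP of codes then $\CC\cap\CD=\{0\}$, so by the rank computation above both equalities hold. For the reverse direction, assuming $\dim_{\FF_q}(\CC)+\dim_{\FF_q}(\CD)=mn$ and, say, $rank\!\left(Tr\!\left(H_2M(\pi(G_1)P)^{\top}\right)\right)=rank(G_1)$, the rank computation gives $\dim_{\FF_q}(\CC\cap\CD)=0$, and together with the dimension equality this says $\CC\oplus_{\FF_q}\CD=\FF_{q^m}^n$, i.e. the pair is an ACP of codes; the second rank equality then follows from the forward direction. I expect the main obstacle to be the linear-algebra lemma identifying $rank\!\left(Tr\!\left(H_2M(\pi(G_1)P)^{\top}\right)\right)$ with $\dim_{\FF_q}(\CC)-\dim_{\FF_q}(\CC\cap\CD)$: one must be careful that the rows of $G_1$ need not be $\FF_q$-independent in a way that survives the trace pairing, but non-degeneracy of $\mathcal{B}$ (property (a)) together with $(\CD^{\perp_R})^{\perp_L}=\CD$ ensures the kernel of the pairing map on $\CC$ is precisely $\CC\cap\CD$, which is what makes the count go through. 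The permutation/diagonal bookkeeping and the $\FF_q$-bilinearity of $Tr$ are routine once set up correctly.
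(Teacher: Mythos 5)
Your proposal is correct in substance and rests on the same mechanism as the paper's proof: the $(i,j)$ entry of $Tr\left(H_2M\left(\pi(G_1)P\right)^{\top}\right)$ is the value of $\mathcal{B}$ on the $i$-th row of $H_2$ and the $j$-th row of $G_1$, the rows of $H_2$ cut out $\CD$ via the double-dual identity~\eqref{eq-1.13}, and full rank is therefore equivalent to $\CC\cap\CD=\{0\}$. The organization differs, though: the paper argues each direction separately by element-chasing (take $\mathbf x=\alpha G_1=\beta G_2\in\CC\cap\CD$, push through $\pi$, $P$, $M$, $H_2$, and invoke the rank hypothesis to kill $\alpha$; for the converse, extract a nonzero kernel vector $\mathbf x$ and exhibit $\mathbf xG_1\in\CC\cap\CD$), whereas you prove a single rank identity $rank\left(Tr\left(H_2M\left(\pi(G_1)P\right)^{\top}\right)\right)=\dim_{\FF_q}(\CC)-\dim_{\FF_q}(\CC\cap\CD)$ from which both implications fall out at once. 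Your formulation buys a genuine extra observation the paper does not make: under the dimension hypothesis each of the two rank equalities alone is already equivalent to the ACP property, so the two conditions in the statement are redundant. Two bookkeeping points to tighten: (i) your dual labels are not internally consistent --- if you evaluate $\mathcal{B}(\mathbf h,\mathbf c)$ with $\mathbf h$ in the first slot, the rows of $H_2$ should span $\CD^{\perp_L}$ and the kernel is $(\CD^{\perp_L})^{\perp_R}=\CD$, not $(\CD^{\perp_R})^{\perp_L}$ (you flag this yourself, and the paper is equally cavalier about which dual the parity-check matrix generates); (ii) the map $\mathbf x\mapsto \mathbf xG_1$ interacts with $\pi$ semilinearly over $\FF_q$, so the kernel of the trace matrix is really $\{\pi(\alpha):\alpha G_1\in\CC\cap\CD\}$, which has the same $\FF_q$-dimension, so the count survives. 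Neither point is a gap in the idea.
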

\begin{proof}
Suppose that $rank\left(Tr\left(H_2M\left(\pi\left(G_1\right)P\right)^{\top}\right)\right) = rank(G_1)$  and $rank\left(Tr\left(H_1M\left(\pi\left(G_2\right)P\right)^{\top}\right)\right) = rank(G_2)$. 
%Assume that $rank\left(Tr\left(H_2M\left(\pi\left(G_1\right)P\right)^{\top}\right)\right) = rank(G_1)$.
To prove $(\CC,\CD)$ is an ACP of codes, we need to show that $\CC\cap \CD=\{0\}$. Let $G_1$ be an $k\times n$ matrix and $G_2$ be an $nm-k\times n$ matrix. If $\mathbf x\in \CC\cap \CD$ then
\begin{align*}
&\mathbf x=\mathbf{\alpha} G_1~\text{and}~\mathbf x= \mathbf{\beta} G_2~\text{where}~\mathbf{\alpha}\in \FF_q^n,~\mathbf{\beta}\in \FF_q^{nm-k},&\\
\implies& \pi(\mathbf{\alpha})\pi(G_1)=\pi(\mathbf{\beta})\pi(G_2),~\text{where}~ \pi ~\text{is a field automorphism on}~ \FF_{q^m},&\\
\implies&\pi(\mathbf{\alpha})\pi(G_1)P=\pi(\mathbf{\beta})\pi(G_2)P,~\text{where}~ \sigma~:~\{1,2,\ldots,n\}\rightarrow\{1,2,\ldots,n\}~ &\\
&\text{ is a permutation with corresponding matrix } P \text{ such that }
P_{ij} = \left\{\begin{array}{ll}
1 & \textrm{if }i=\sigma( j );\\
0 & \textrm{otherwise,}
\end{array}\right.
& \\
\implies & M\left(\pi(G_1)P\right)^\top\pi(\alpha)^\top = M\left(\pi(G_2)P\right)^\top\pi(\beta)^\top,&\\
&\text{where } M = diag(\mu_1,\mu_2,\ldots,\mu_n) \text{ is an } n\times n \text{ matrix over } \FF_{q^m} \text{ with } \mu_i\in\FF_{q^m}^*,& \\
\implies & H_2M\left(\pi(G_1)P\right)^\top\pi(\alpha)^\top = H_2M\left(\pi(G_2)P\right)^\top\pi(\beta)^\top=0,&\\
& H_1M\left(\pi(G_2)P\right)^\top\pi(\beta)^\top = H_1M\left(\pi(G_1)P\right)^\top\pi(\alpha)^\top=0,&\\
\implies & \pi(\alpha)=0 \text{ as }rank\left(Tr\left(H_2M\left(\pi\left(G_1\right)P\right)^{\top}\right)\right) = rank(G_1),&\\
& \pi(\beta) = 0 \text{ as } rank\left(Tr\left(H_1M\left(\pi\left(G_2\right)P\right)^{\top}\right)\right) = rank(G_2)&.
\end{align*}
Hence, $\mathbf x = 0$ i.e., $\CC\cap \CD=\{0\}$. As by hypothesis $\dim_{\FF_q}(\CC)+\dim_{\FF_q}(\CD)=mn$, the pair $(\CC,\CD)$ is an ACP of codes.

Conversely, let us suppose that the pair $(\CC,\CD)$ is an ACP of codes. Let $rank\left(Tr\left(H_2M\left(\pi\left(G_1\right)P\right)^{\top}\right)\right) < rank(G_1)$.
Then there exists $\mathbf x(\neq 0)\in \FF_{q}^n$ such that $\left(Tr\left(H_2M\left(\pi\left(G_1\right)P\right)^{\top}\right)\right)\mathbf x^\top = 0$. 
That implies, 
$\left(Tr\left(H_2M\left(\pi\left(\mathbf x G_1\right)P\right)^{\top}\right)\right) = 0$. That is, $\mathbf x G_1\in\CD$. As already, $\mathbf x G_1\in \CC$, $\mathbf x G_1(\neq 0)\in \CC \cap \CD$, which contradicts the fact that the $(\CC,\CD)$ is an ACP of codes. Therefore, $rank\left(Tr\left(H_2D\pi\left(G_1\right)P\right)^{\top}\right)=rank(G_1)$.
Similarly, we can prove that $rank\left(Tr\left(H_1M\left(\pi\left(G_2\right)P\right)^{\top}\right)\right)=rank(G_2)$.
\end{proof}

\begin{example}
Take $\FF_{q^m}=\FF_{4}$ with a primitive element $\omega$. Let $\CC$ and $\CD$ be two $\FF_2$-linear additive codes over $\FF_4$ of length $3$ with generator matrices
$G_1=\left(\begin{array}{ccc}
1 & 1& 0 \\
\omega & 0 & \omega \\
0 & \omega &\omega 
\end{array}\right)$ and 
$G_2=\left(\begin{array}{ccc}
1 & 0& 1 \\
1 & 1 & 1 \\
\omega & \omega & \omega
\end{array}\right)$, respectively.
Here, we take $M, \pi$ and $\sigma$ are all identity.\\
Consider a parity check matrices 
$H_1=\left(\begin{array}{ccc}
1 & 1 & 1 \\
\omega & \omega & \omega \\
\omega^2 & \omega^2 & 0
\end{array}\right)$ and 
$H_2=\left(\begin{array}{ccc}
1 & 0& 1 \\
1 & 1 & 0 \\
\omega & 0 & \omega
\end{array}\right)$ of $\CC$ and $\CD$, respectively.
Then 
$H_1G_2^\top=\left(\begin{array}{ccc}
0 & 1 & \omega \\
0 & \omega & \omega^2 \\
\omega^2 & 0 & 0
\end{array}\right)$ and 
$H_2G_1^\top=\left(\begin{array}{ccc}
1 & 0& \omega \\
1 & \omega & \omega \\
\omega & 0 & \omega^2
\end{array}\right)$.
Here, $Tr(H_1G_2^\top)=\left(\begin{array}{ccc}
0 & 0& 1 \\
0 & 1 & 1 \\
1 & 0 & 0  
\end{array}\right)$ and 
$Tr(H_2G_1^\top)=\left(\begin{array}{ccc}
0 & 0& 1 \\
0 & 1 & 1 \\
1 & 0 & 1  
\end{array}\right)$ 
which are having rank $3$.
Hence, $(\CC,\CD)$ is ACP over $\FF_4$ (see Theorem~\ref{theorem-11q}).
\end{example}
\begin{example}
Take $\FF_{q^m}=\FF_{4}$ with a primitive element $\omega$ and an automorphism
$\pi$ on $\FF_{q^m}$ defined as $\pi(x)=x^2$. Let $\CC$ and $\CD$ be two $\FF_2$-linear additive codes over $\FF_4$ of length $6$ with generator matrices \\
$G_1=\left(\begin{array}{cccccc}
1 & 0& 0 & 0& 0 & 1 \\
\omega & 0& 0 & 0& 0 &\omega \\
0 & 1& 0 & 0& 0 & 1 \\
0 & \omega & 0 & 0& 0 &\omega \\
0 & 0& 1 & 0& 0 & 1 \\
0 & 0 & \omega & 0& 0 &\omega \\
0 & 0& 0 & 1& 0 & 1 \\
0 & 0 & 0 & \omega & 0 &\omega
\end{array}\right)$ and 
$G_2=\left(\begin{array}{cccccc}
0 & 0& 0 & 0& 1 & 0 \\
0 & 0& 0 & 0& \omega & 0 \\
1 & 1& 1 & 1& 0 & 1 \\
\omega & \omega & \omega & \omega& 0 &\omega
\end{array}\right)$, respectively.\\ 
It is easy to see that $(\CC,\CD)$ is an ACP of codes with respect to $P$ such that
\[P_{ij} = \left\{\begin{array}{ll}
1 & \textrm{if~ off-diagonal element };\\
0 & \textrm{otherwise,}
\end{array}\right.\]
$M = diag(\mu_1,\mu_2,\ldots,\mu_n)$ is an $n\times n$ matrix with $\mu_i\in\FF_{4}^*$.
However, $rank\left(Tr\left( \begin{array}{cc}
M\left(\pi(G_1)P\right)^\top  \\
\end{array}\right)\right) = 6$, which coincides with Theorem~\ref{theorem-11q}, since $rank\left(Tr\left(H_2M\left(\pi\left(G_1\right)P\right)^{\top}\right)\right) = rank(G_1)$ and \\
$rank\left(Tr\left(H_1M\left(\pi\left(G_2\right)P\right)^{\top}\right)\right) = rank(G_2)$.
\end{example}
%%%%%%%%%%%%%%%%%%%%%%%%%%%%%%%%
\section{Building-up construction for ACP of codes}\label{sec:build}
We first present some constructions of $\FF_q$-linear additive code over $\FF_{q^m}$ by using a linear code over $\FF_{q^m}$. Recall that a linear code over $\FF_{q^m}$ of length $n$ is a subspace of $\FF_{q^m}^n$. Let $\Tilde{G}$ be a $(k\times n)$ generator matrix of a linear code $\Tilde{C}$ over $\FF_{q^m}$. As $\FF_{q^m}$ forms a vector space over $\FF_q$, consider $\{1,\alpha,\alpha^2,\ldots,\alpha^{m-1}\}$, a basis of $\FF_{q^m}$ over $\FF_q$. Denote 
\begin{align}\label{eq-qq}
G=& \left(\begin{array}{cc}
\Tilde{G}  \\
\alpha\Tilde{G} \\
\vdots\\
\alpha^{m-1}\Tilde{G}
\end{array}\right).
\end{align}
It can be shown that $G$ is a generator matrix of an $\FF_{q}$-linear additive code, say $\CC$, corresponding to the linear code $\Tilde{\CC}$ over $\FF_{q^m}$, which is nothing but $\Tilde{\CC}$, i.e., $\CC=\Tilde{\CC}$. Next, we will present a construction of ACP of codes from LCP of codes over $\FF_{q^m}$. Towards the LCP of codes, we define a general inner product on $\FF_{q^m}$, which is the analogue of the inner product $\mathcal{B}$ (see Equation~\eqref{eq-1.11}), using the bilinear mapping 
\begin{align}\label{eq-1.11a}
\Tilde{\mathcal{B}}~:& ~\FF_{q^m}^n\times \FF_{q^m}^n\rightarrow \FF_{q^m} \text{ such that }& \nonumber \\
((a_1,a_2,\ldots,a_n),(b_1,b_2,\ldots,a_n))&\mapsto \Tilde{\mathcal{B}}((a_1,a_2,\ldots,a_n),(b_1,b_2,\ldots,a_n))=\sum\limits_{i=1}^n \mu_ia_i\pi(b_{\sigma(i)})&
\end{align}
where $\mu_i, \sigma$, and $\pi$ are as defined in Equation~\eqref{eq-1.11}.
Notice that, in general, $\Tilde{\mathcal{B}}$ is not symmetric. A pair of codes $(\Tilde{\CC},\Tilde{\CD})$ of length $n$ over $\FF_{q^m}$ is called an LCP of codes if $\Tilde{\CC}\oplus_{\FF_{q^m}}\Tilde{\CD}=\FF_{q^m}^n$. That is, a pair of codes $(\Tilde{\CC},\Tilde{\CD})$ of length $n$ over $\FF_{q^m}$ is LCP if and only if $\Tilde{\CC}\cap\Tilde{\CD}=\{0\}$ and $\dim_{\FF_{q^m}}(\Tilde{\CC})+\dim_{\FF_{q^m}}(\Tilde{\CD})=n$. The following proposition presents a characterization of LCP of codes over $\FF_{q^m}$ with respect to the inner product $\Tilde{\mathcal{B}}$ (see Equation~\eqref{eq-1.11a}).
\begin{prop}\label{prop-qq}
Let $\Tilde{\CC}$ and $\Tilde{\CD}$ be two linear codes of length $n$ over $\FF_{q^m}$ with generator matrices $\Tilde{G}_1$, $\Tilde{G}_2$ and parity check matrices $\Tilde{H}_1$, $\Tilde{H}_2$, respectively. Then, the following are equivalent.
\begin{enumerate}
\item The pair $(\Tilde{\CC},\Tilde{\CD})$ is LCP.
\item The $rank\left( M\left(\begin{array}{cc}
      \pi(\Tilde{G}_1)  \\
      \pi(\Tilde{G}_2) 
\end{array}\right)P\right) = n$.
\item The $rank\left(\left(\Tilde{H}_2M\left(\pi\left(\Tilde{G}_1\right)P\right)^{\top}\right)\right)=rank(\Tilde{G}_1)$ and $rank\left(Tr\left(\Tilde{G}_1M\left(\pi\left(\Tilde{G}_2\right)P\right)^{\top}\right)\right)=rank(\Tilde{G}_2)$. Additionally, $\dim_{\FF_{q^m}}(\Tilde{\CC})+\dim_{\FF_{q^m}}(\Tilde{\CD})=n$. 
 \end{enumerate}
Here, $\pi$ is a field automorphism on $\FF_{q^m}$, $\sigma~:~\{1,2,\ldots,n\}\rightarrow\{1,2,\ldots,n\}$ is a permutation with corresponding matrix $P$ such that
\[P_{ij} = \left\{\begin{array}{ll}
1 & \textrm{if }i=\sigma( j );\\
0 & \textrm{otherwise,}
\end{array}\right. \]
and $M=diag(\mu_1,\mu_2,\ldots,\mu_n)$ is an $n\times n$ matrix over $\FF_{q^m}$ with $\mu_i\in\FF_{q^m}^*$ as defined in Equation~\eqref{eq-1.11}.
\end{prop}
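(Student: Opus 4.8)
The plan is to prove the chain of equivalences $1)\Leftrightarrow 2)$ and $1)\Leftrightarrow 3)$, since the essential content is just the $\FF_{q^m}$-analogue of Theorem~\ref{th-1.2} and Theorem~\ref{theorem-11q}, with the trace map $Tr$ removed because $\Tilde{\mathcal{B}}$ already takes values in $\FF_{q^m}$. First I would record the matrix form of the inner product: for $\mathbf{x},\mathbf{c}\in\FF_{q^m}^n$ one has $\Tilde{\mathcal{B}}(\mathbf{x},\mathbf{c}) = \mathbf{x} M \bigl(\pi(\mathbf{c})P\bigr)^{\top}$, so that for a code $\Tilde{\CC}$ with generator matrix $\Tilde{G}_1$, the left-dual $\Tilde{\CC}^{\perp_L}$ is exactly the right kernel of $M\bigl(\pi(\Tilde{G}_1)P\bigr)^{\top}$ viewed as a matrix acting on $\FF_{q^m}^n$. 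The key elementary fact I would use throughout is that, since $\pi$ is a field automorphism of $\FF_{q^m}$, the map $\mathbf{v}\mapsto\pi(\mathbf{v})$ is an $\FF_q$-semilinear bijection of $\FF_{q^m}^n$ that preserves $\FF_{q^m}$-dimension, and $M$ and $P$ are invertible; hence $\operatorname{rank}\bigl(M(\pi(G)P)^{\top}\bigr) = \operatorname{rank}(\pi(G)) = \operatorname{rank}(G)$.

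For $1)\Leftrightarrow 2)$: stack $\Tilde{G}_1$ over $\Tilde{G}_2$ to form the $n\times n$ candidate matrix $N := M\left(\begin{smallmatrix}\pi(\Tilde{G}_1)\\ \pi(\Tilde{G}_2)\end{smallmatrix}\right)P$ (here it is more convenient to take the un-transposed form as in the statement). A nonzero vector $\mathbf{x}\in\FF_{q^m}^n$ lies in the right kernel of $N^{\top}$ iff $\Tilde{\mathcal{B}}(\mathbf{x},\mathbf{c})=0$ for all $\mathbf{c}\in\Tilde{\CC}$ and for all $\mathbf{c}\in\Tilde{\CD}$ simultaneously, i.e. iff $\mathbf{x}\in\Tilde{\CC}^{\perp_L}\cap\Tilde{\CD}^{\perp_L}$. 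So $\operatorname{rank}(N)=n$ iff $\Tilde{\CC}^{\perp_L}\cap\Tilde{\CD}^{\perp_L}=\{0\}$. By the LCP-analogues of Lemma~\ref{lm-1.1} and Theorem~\ref{th-1.1} (which go through verbatim over $\FF_{q^m}$, using $\dim_{\FF_{q^m}}\Tilde{\CC}+\dim_{\FF_{q^m}}\Tilde{\CC}^{\perp_L}=n$), the pair $(\Tilde{\CC},\Tilde{\CD})$ is LCP iff $(\Tilde{\CC}^{\perp_L},\Tilde{\CD}^{\perp_L})$ is LCP. Now if $(\Tilde{\CC},\Tilde{\CD})$ is LCP then dimensions add to $n$, hence so do the dual dimensions, and $\Tilde{\CC}^{\perp_L}\cap\Tilde{\CD}^{\perp_L}=\{0\}$ forces $\operatorname{rank}(N)=n$; conversely $\operatorname{rank}(N)=n$ forces both $\Tilde{\CC}^{\perp_L}\cap\Tilde{\CD}^{\perp_L}=\{0\}$ and (by counting rows) $\dim\Tilde{\CC}^{\perp_L}+\dim\Tilde{\CD}^{\perp_L}\ge n$, so $(\Tilde{\CC}^{\perp_L},\Tilde{\CD}^{\perp_L})$ is LCP, hence so is $(\Tilde{\CC},\Tilde{\CD})$.

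For $1)\Leftrightarrow 3)$: here the hypothesis $\dim_{\FF_{q^m}}\Tilde{\CC}+\dim_{\FF_{q^m}}\Tilde{\CD}=n$ is assumed, so it remains to show $\Tilde{\CC}\cap\Tilde{\CD}=\{0\}$ iff the two rank conditions hold. This is the $\FF_{q^m}$-version of the argument in Theorem~\ref{theorem-11q}: if $\mathbf{y}=\boldsymbol{\alpha}\Tilde{G}_1=\boldsymbol{\beta}\Tilde{G}_2\in\Tilde{\CC}\cap\Tilde{\CD}$, apply $\pi$, multiply on the right by $P$, transpose and multiply by $M$, then hit with $\Tilde{H}_2$ to get $\Tilde{H}_2 M(\pi(\Tilde{G}_1)P)^{\top}\pi(\boldsymbol{\alpha})^{\top}=\Tilde{H}_2 M(\pi(\Tilde{G}_2)P)^{\top}\pi(\boldsymbol{\beta})^{\top}=0$ because $\Tilde{H}_2$ is the parity-check matrix of $\Tilde{\CD}$ and $\mathbf{y}\in\Tilde{\CD}$; the rank condition $\operatorname{rank}\bigl(\Tilde{H}_2 M(\pi(\Tilde{G}_1)P)^{\top}\bigr)=\operatorname{rank}(\Tilde{G}_1)$ then forces $\pi(\boldsymbol{\alpha})^{\top}$, hence $\mathbf{y}$, to be $0$ (the column space argument: the kernel of $\Tilde{H}_2 M(\pi(\Tilde{G}_1)P)^{\top}$ has dimension $n-\operatorname{rank}(\Tilde{G}_1)$, which matches the left-null space of $\Tilde{G}_1$). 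Symmetrically with $\Tilde{H}_1$ and $\boldsymbol{\beta}$. Conversely, if say $\operatorname{rank}\bigl(\Tilde{H}_2 M(\pi(\Tilde{G}_1)P)^{\top}\bigr)<\operatorname{rank}(\Tilde{G}_1)$, pick $\mathbf{x}\ne 0$ in the kernel of $\Tilde{H}_2 M(\pi(\Tilde{G}_1)P)^{\top}$ that is not in the left-null space of $\Tilde{G}_1$; then $\mathbf{x}\Tilde{G}_1$ is a nonzero element of $\Tilde{\CC}$ lying in $\Tilde{\CD}$ (since its ``syndrome'' against $\Tilde{H}_2$ vanishes), contradicting $\Tilde{\CC}\cap\Tilde{\CD}=\{0\}$. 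The main obstacle — really the only place needing care — is the bookkeeping that turns the rank equality $\operatorname{rank}\bigl(\Tilde{H}_2 M(\pi(\Tilde{G}_1)P)^{\top}\bigr)=\operatorname{rank}(\Tilde{G}_1)$ into the statement ``$\boldsymbol{\alpha}\Tilde{G}_1=0$'', i.e. comparing the $\FF_{q^m}$-dimension of the relevant kernel with the dimension of the left-null space of $\Tilde{G}_1$ and noting $\pi$ preserves these dimensions; I would also remark that item 3 as printed has a couple of typos ($Tr$ and a missing $Tr$) that simply should not be there in the $\FF_{q^m}$-setting, and I would state the clean version.
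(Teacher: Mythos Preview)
Your proposal is correct and follows the same approach as the paper: the paper's own proof is literally the single sentence ``The proof of this Proposition is similar to the proof of the Theorem~\ref{theorem-11q},'' and you have done exactly that adaptation, carrying over the arguments of Theorems~\ref{th-1.2} and~\ref{theorem-11q} to the $\FF_{q^m}$-linear setting with $\Tilde{\mathcal B}$ in place of $\mathcal B$ and the trace removed. Your remark about the typos in item~3 of the statement is also apt.
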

The proof of this Proposition is similar to the proof of the Theorem~\ref{theorem-11q}.
Now, we will present an ACP of codes from a given pair of LCP of codes.
\begin{prop}\label{prop-4q}
Let $\Tilde{\CC}$, $\Tilde{\CD}$ be two linear codes of length $n$ over $\FF_{q^m}$. Let $\CC$, $\CD$ be the codes constructed from $\Tilde{\CC}$, $\Tilde{\CD}$, respectively, as presented in Equation~\eqref{eq-qq}. If the pair $(\Tilde{\CC},\Tilde{\CD})$ is an LCP of codes over $\FF_{q^m}$ then $(\CC,\CD)$ is an ACP of codes over $\FF_{q^m}$.
\end{prop}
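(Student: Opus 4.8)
The plan is to observe that the building-up construction in Equation~\eqref{eq-qq} does not alter the underlying set of codewords: as already noted just after that equation, $\CC=\Tilde{\CC}$ and, by the same reasoning, $\CD=\Tilde{\CD}$ as subsets of $\FF_{q^m}^n$. Indeed, since $\{1,\alpha,\dots,\alpha^{m-1}\}$ is an $\FF_q$-basis of $\FF_{q^m}$, every $\FF_{q^m}$-linear combination of the rows of $\Tilde{G}$ is an $\FF_q$-linear combination of the rows of $G$, and conversely; only the ambient scalar structure is being reinterpreted, from a vector space over $\FF_{q^m}$ to one over $\FF_q$. So the content of the proposition is that a direct-sum decomposition over $\FF_{q^m}$ refines to one over $\FF_q$ once scalars are restricted.

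Granting this, I would verify the two defining conditions of an ACP of codes directly. First, because $(\Tilde{\CC},\Tilde{\CD})$ is an LCP of codes we have $\Tilde{\CC}\cap\Tilde{\CD}=\{0\}$, and since the sets coincide this gives $\CC\cap\CD=\{0\}$. Second, any $\FF_{q^m}$-subspace $V\subseteq\FF_{q^m}^n$ of $\FF_{q^m}$-dimension $d$ has $\FF_q$-dimension $md$: take an $\FF_{q^m}$-basis of $V$ and multiply each basis vector by $1,\alpha,\dots,\alpha^{m-1}$ to obtain an $\FF_q$-basis. Applying this to $\Tilde{\CC}$ and to $\Tilde{\CD}$, and using that $\Tilde{\CC}\oplus_{\FF_{q^m}}\Tilde{\CD}=\FF_{q^m}^n$ forces $\dim_{\FF_{q^m}}(\Tilde{\CC})+\dim_{\FF_{q^m}}(\Tilde{\CD})=n$, I obtain
\[
\dim_{\FF_q}(\CC)+\dim_{\FF_q}(\CD)=m\dim_{\FF_{q^m}}(\Tilde{\CC})+m\dim_{\FF_{q^m}}(\Tilde{\CD})=m\bigl(\dim_{\FF_{q^m}}(\Tilde{\CC})+\dim_{\FF_{q^m}}(\Tilde{\CD})\bigr)=mn.
\]
Together, $\CC\cap\CD=\{0\}$ and $\dim_{\FF_q}(\CC)+\dim_{\FF_q}(\CD)=mn$ are exactly the equivalent characterization of an ACP of codes recorded at the beginning of Section~\ref{sec:char}, so $(\CC,\CD)$ is an ACP of codes.

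I do not anticipate a real obstacle here: the argument is simply that restriction of scalars converts $\oplus_{\FF_{q^m}}$ into $\oplus_{\FF_q}$, and the only step deserving an explicit line is the set identity $\CC=\Tilde{\CC}$ (and $\CD=\Tilde{\CD}$), a standard subfield-expansion fact that is independent of the chosen basis $\{1,\alpha,\dots,\alpha^{m-1}\}$ and is already asserted in the paragraph preceding the proposition. A matrix-theoretic presentation is also possible -- one can instead feed the LCP characterization of Proposition~\ref{prop-qq} into the rank criterion of Theorem~\ref{theorem-11q} -- but the set-theoretic route is shorter and shows that the inner products $\Tilde{\mathcal{B}}$ and $\mathcal{B}$ are irrelevant to the statement, since both LCP and ACP were defined intrinsically as direct-sum decompositions.
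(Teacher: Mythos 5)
Your proof is correct and follows essentially the same route as the paper's: both arguments rest on the observation (made right after Equation~\eqref{eq-qq}) that the construction does not change the underlying set of codewords, so $\Tilde{\CC}\cap\Tilde{\CD}=\{0\}$ transfers directly to $\CC\cap\CD=\{0\}$, and on the dimension count $\dim_{\FF_q}=m\dim_{\FF_{q^m}}$ giving $\dim_{\FF_q}(\CC)+\dim_{\FF_q}(\CD)=mn$. The paper phrases the intersection step via generator matrices and a contradiction, but the content is identical; your direct set-theoretic formulation is, if anything, slightly cleaner.
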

\begin{proof}
Let $\Tilde{G}_1$, $\Tilde{G}_2$ be generator matrices of the linear codes $\Tilde{\CC}$, $\Tilde{\CD}$, respectively. Using Equation~\ref{eq-qq}, construct $G_1$ and $G_2$ which are generator matrices of $\CC$, $\CD$, respectively. Since $(\Tilde{\CC},\Tilde{\CD})$ is an LCP of codes, $\dim_{\FF_{q^m}}(\Tilde{\CC})+\dim_{\FF_{q^m}}(\Tilde{\CD})=n$ and that implies $\dim_{\FF_q}(\CC)+\dim_{\FF_q}(\CD) = mn$. We need to prove that $\CC\cap \CD=\{0\}$. On the contrary, assume that there exists a non-zero  $\mathbf x \in \FF_q^n$ such that $\mathbf x\in \CC\cap \CD$. Then $\mathbf x = \gamma G_1 = \delta G_2$ for some $\gamma\in\FF_q^k$, $\delta\in\FF_q^{mn-k}$ where $k = rank(G_1)$ and $mn-k = rank(G_2)$. Then from Equation ~\eqref{eq-qq}, we have that $\mathbf x \in \Tilde{\CC}\cap\Tilde{\CD}$. Since $\mathbf x$ is nonzero vector in $\FF_q^n$, it contradicts that $(\Tilde{\CC},\Tilde{\CD})$ is an LCP. Hence, $(\CC,\CD)$ is an ACP of codes.
\end{proof}
Towards this, for an $\FF_q$-linear additive code $C$ over $\FF_{q^m}$ of length $n$, we define
$$Tr(\CC):=\{Tr(\mathbf c): \mathbf c\in \CC\}.$$
Note that $Tr(\CC)$ is a linear code over $\FF_q$ of length $n$. We call $Tr(\CC)$, a trace code of $\CC$. Further, if $G$ is a generator matrix of $\CC$, then $Tr(G)$ (defined in Section~\ref{sec:char}) is a generator matrix of $Tr(\CC)$. 
\begin{thm}\label{th-1zx}
Let $\CC$ and $\CD$ be two $\FF_q$-linear additive codes over $\FF_{q^m}$ of length $n$. If the pair $(\CC,\CD)$ is an ACP of codes over $\FF_{q^m}$ (using identity $\pi$, $\sigma$ and $M$ in Proposition~\ref{prop-qq}), then $(Tr(\CC),Tr(\CD))$ is an LCP of codes over $\FF_{q}$. In particular, if $\CC$ is an additive complementary dual (ACD) codes over $\FF_{q^m}$ (using identity $\pi$, $\sigma$ and $M$ in Proposition~\ref{prop-qq}), then $Tr(\CC)$ is an LCD codes over $\FF_q$.
\end{thm}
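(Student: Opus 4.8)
The plan is to verify, for the pair $(Tr(\CC),Tr(\CD))$ over $\FF_q$, the two conditions that define an LCP of codes: first that $Tr(\CC)+_{\FF_q}Tr(\CD)=\FF_q^n$, and second that $\dim_{\FF_q}Tr(\CC)+\dim_{\FF_q}Tr(\CD)=n$, which (together with the first) amounts to $Tr(\CC)\cap Tr(\CD)=\{0\}$.

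The first condition is the routine half. Since the componentwise trace map $Tr\colon\FF_{q^m}^n\to\FF_q^n$ is $\FF_q$-linear and surjective (because $Tr\colon\FF_{q^m}\to\FF_q$ is onto), and $\CC+_{\FF_q}\CD=\FF_{q^m}^n$ by the ACP hypothesis, we get $Tr(\CC)+_{\FF_q}Tr(\CD)=Tr(\CC+_{\FF_q}\CD)=Tr(\FF_{q^m}^n)=\FF_q^n$. Equivalently, if $G_1,G_2$ are generator matrices of $\CC,\CD$, then $Tr(G_1),Tr(G_2)$ generate $Tr(\CC),Tr(\CD)$ (as recalled just before the statement), and the $\FF_q$-row space of the stacked matrix $Tr\left(\begin{array}{c}G_1\\ G_2\end{array}\right)$ is the image under $Tr$ of the $\FF_q$-row space of $\left(\begin{array}{c}G_1\\ G_2\end{array}\right)$, which is all of $\FF_{q^m}^n$; hence that trace matrix has $\FF_q$-rank $n$.

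For the second condition I would argue by a dimension count. Let $N=\{\mathbf v\in\FF_{q^m}^n : Tr(v_i)=0\text{ for }1\le i\le n\}$ be the componentwise kernel of the trace, an $\FF_q$-subspace of $\FF_{q^m}^n$ with $\dim_{\FF_q}N=n(m-1)$. Restricting $Tr$ to $\CC$ (resp.\ to $\CD$) yields $\dim_{\FF_q}Tr(\CC)=\dim_{\FF_q}\CC-\dim_{\FF_q}(\CC\cap N)$ (resp.\ $\dim_{\FF_q}Tr(\CD)=\dim_{\FF_q}\CD-\dim_{\FF_q}(\CD\cap N)$). Adding these and using $\dim_{\FF_q}\CC+\dim_{\FF_q}\CD=mn$, the target equality $\dim_{\FF_q}Tr(\CC)+\dim_{\FF_q}Tr(\CD)=n$ becomes equivalent to $\dim_{\FF_q}(\CC\cap N)+\dim_{\FF_q}(\CD\cap N)=n(m-1)=\dim_{\FF_q}N$. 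Since $\CC\cap\CD=\{0\}$ forces $(\CC\cap N)\cap(\CD\cap N)=\{0\}$, this is in turn equivalent to the splitting identity $(\CC\cap N)+_{\FF_q}(\CD\cap N)=N$. I expect this identity to be the main obstacle: the inclusion ``$\subseteq$'' is automatic, but ``$\supseteq$'' --- that every trace-zero word decomposes as a trace-zero word of $\CC$ plus a trace-zero word of $\CD$ --- is the delicate point, and is where the ACP hypothesis together with the choice of identity $\pi,\sigma,M$ should be used crucially. My plan to push this through is to recast it in the rank language of Theorem~\ref{theorem-11q} and Proposition~\ref{prop-qq}, pinning down parity-check matrices of $Tr(\CC)$ and $Tr(\CD)$ in terms of $G_1,G_2,H_1,H_2$ by a Delsarte-type identification of the relevant duals, so that the ACP rank equalities $rank\big(Tr(H_2G_1^{\top})\big)=rank(G_1)$ and $rank\big(Tr(H_1G_2^{\top})\big)=rank(G_2)$ translate directly into the LCP characterization of $(Tr(\CC),Tr(\CD))$ over $\FF_q$.

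The ``in particular'' clause then follows by specialization. If $\CC$ is ACD for $\mathcal{B}$ with identity $\pi,\sigma,M$, then $\CC\cap\CC^{\perp_L}=\{0\}$ together with $\dim_{\FF_q}\CC+\dim_{\FF_q}\CC^{\perp_L}=nm$ from Equation~\eqref{eq-1.13a} shows that $(\CC,\CC^{\perp_L})$ is an ACP of codes; the first part then gives that $(Tr(\CC),Tr(\CC^{\perp_L}))$ is an LCP over $\FF_q$. It remains to verify that $Tr(\CC^{\perp_L})$ coincides with the Euclidean dual of $Tr(\CC)$ over $\FF_q$ --- one inclusion, together with the dimension balance from the previous paragraph, forces equality --- and this is the step I would check most carefully. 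Granting it, $Tr(\CC)$ is LCD over $\FF_q$.
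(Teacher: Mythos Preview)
Your argument for the first condition, $Tr(\CC)+_{\FF_q}Tr(\CD)=\FF_q^n$, is essentially the paper's entire proof: the paper applies Theorem~\ref{th-1.2} with identity $\pi,\sigma,M$ to obtain $\operatorname{rank}\,Tr\!\left(\begin{smallmatrix}G_1\\G_2\end{smallmatrix}\right)=n$, observes that $Tr(G_1),Tr(G_2)$ generate $Tr(\CC),Tr(\CD)$, and then invokes the rank criterion of Proposition~\ref{prop-qq} to conclude LCP. The paper does not separately verify the dimension/intersection condition that you isolate as the ``delicate point.''

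You are right that this second condition needs independent verification, but the splitting identity $(\CC\cap N)+_{\FF_q}(\CD\cap N)=N$ you reduce it to is \emph{false} in general, so no recasting in the rank language of Theorem~\ref{theorem-11q} can salvage it. Take $q=2$, $m=2$, $n=1$, and $\CC=\{0,\omega\}$, $\CD=\{0,\omega^2\}\subset\FF_4$ (with $\omega^2+\omega+1=0$). Then $\CC\oplus_{\FF_2}\CD=\FF_4$, so $(\CC,\CD)$ is an ACP; yet $N=\ker Tr=\{0,1\}$ while $\CC\cap N=\CD\cap N=\{0\}$, and indeed $Tr(\CC)=Tr(\CD)=\FF_2$, so $Tr(\CC)\cap Tr(\CD)\neq\{0\}$ and $(Tr(\CC),Tr(\CD))$ is not an LCP. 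Thus the obstacle you flagged is not a technicality to be overcome but a genuine obstruction: the conclusion as written cannot hold without an additional hypothesis (for instance that $\dim_{\FF_q}Tr(\CC)+\dim_{\FF_q}Tr(\CD)=n$, or that $\CC,\CD$ arise from $\FF_{q^m}$-linear codes as in Equation~\eqref{eq-qq}). Your route to the ACD clause passes through the main statement and so inherits the same problem; if that clause is to be rescued, a direct argument via the identity $Tr(\CC)^\perp=\CC^{\perp_L}\cap\FF_q^n$ (which follows from $\langle u,Tr(\mathbf c)\rangle=\mathcal{B}(u,\mathbf c)$ for $u\in\FF_q^n$) would be the more promising starting point than the reduction you propose.
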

\begin{proof}
Let $G_1$ and $G_2$ be two generator matrices of $\CC$ and $\CD$, respectively. Since, the pair $(\CC,\CD)$ is an ACP of codes over $\FF_{q^m}$ (using identity $\pi$, $\sigma$ and $M$ in Proposition~\ref{prop-qq}), then by Theorem~\ref{th-1.2}, we obtain that
$rank\left(Tr\left( \begin{array}{cc}
G_1  \\ G_2 
\end{array}\right)\right)=n$.
That is, $rank\left( \begin{array}{cc}
Tr(G_1)  \\ Tr(G_2) 
\end{array}\right)=n$.
Since $Tr(G_1)$ and $Tr(G_2)$ are generator matrices of $Tr(\CC)$ and $Tr(\CD)$, respectively, from Proposition~\ref{prop-qq}, we get $(Tr(\CC),Tr(\CD))$ is an LCP of codes over $\FF_{q}$.
\end{proof}

\begin{thm}
Let $\Tilde{\CC}$, $\Tilde{\CD}$ be two linear codes of length $n$ over $\FF_{q^m}$. Let $\CC$, $\CD$ be the codes constructed from $\Tilde{\CC}$, $\Tilde{\CD}$, respectively, as presented in Equation~\eqref{eq-qq}. Then the pair $(\Tilde{\CC},\Tilde{\CD})$ is an LCP of codes over $\FF_{q^m}$ if and only if $(\CC,\CD)$ is an ACP of codes over $\FF_{q^m}$ (using identity $\pi$, $\sigma$ and $M$ in Proposition~\ref{prop-qq}).
\end{thm}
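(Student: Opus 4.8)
The statement is an "if and only if" relating an LCP of linear codes over $\FF_{q^m}$ with the ACP of the associated $\FF_q$-linear additive codes built via Equation~\eqref{eq-qq}. The plan is to handle the two directions separately, but both will rest on the same underlying fact: the rank criterion for LCP (Proposition~\ref{prop-qq}, item~2, with identity $\pi$, $\sigma$, $M$) says $rank\!\left(\begin{smallmatrix}\Tilde G_1\\ \Tilde G_2\end{smallmatrix}\right)=n$ over $\FF_{q^m}$, while the rank criterion for the ACP here is $rank\!\left(Tr\left(\begin{smallmatrix}G_1\\ G_2\end{smallmatrix}\right)\right)=n$ over $\FF_q$ (Theorem~\ref{th-1.2} gives necessity, and since the dimension condition $\dim_{\FF_q}\CC+\dim_{\FF_q}\CD=mn$ is automatic from $\dim_{\FF_{q^m}}\Tilde\CC+\dim_{\FF_{q^m}}\Tilde\CD=n$, sufficiency also holds here — or one can argue directly with $\CC\cap\CD=\{0\}$). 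So the whole statement reduces to the linear-algebra equivalence
\[
rank_{\FF_{q^m}}\begin{pmatrix}\Tilde G_1\\ \Tilde G_2\end{pmatrix}=n
\quad\Longleftrightarrow\quad
rank_{\FF_q}\,Tr\begin{pmatrix}G_1\\ G_2\end{pmatrix}=n,
\]
where $G_i$ is the $\FF_q$-expansion of $\Tilde G_i$ from Equation~\eqref{eq-qq}.

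The forward direction is immediate from the results already in the paper: if $(\Tilde\CC,\Tilde\CD)$ is LCP then Proposition~\ref{prop-4q} directly gives that $(\CC,\CD)$ is an ACP of codes over $\FF_{q^m}$, and one checks that the identity choices of $\pi,\sigma,M$ are preserved under the construction. So the only real content is the converse. First I would observe that $\CC=\Tilde\CC$ and $\CD=\Tilde\CD$ as subsets of $\FF_{q^m}^n$ (the expanded generator matrix $G$ in Equation~\eqref{eq-qq} generates, over $\FF_q$, exactly the same subset of $\FF_{q^m}^n$ as $\Tilde G$ generates over $\FF_{q^m}$, since $\{1,\alpha,\dots,\alpha^{m-1}\}$ spans $\FF_{q^m}$ over $\FF_q$). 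Hence $\CC\cap\CD=\Tilde\CC\cap\Tilde\CD$ as sets. Therefore, if $(\CC,\CD)$ is an ACP, then $\Tilde\CC\cap\Tilde\CD=\CC\cap\CD=\{0\}$; and from $mn=\dim_{\FF_q}\CC+\dim_{\FF_q}\CD=m\dim_{\FF_{q^m}}\Tilde\CC+m\dim_{\FF_{q^m}}\Tilde\CD$ we get $\dim_{\FF_{q^m}}\Tilde\CC+\dim_{\FF_{q^m}}\Tilde\CD=n$. These two facts together are exactly the definition of $(\Tilde\CC,\Tilde\CD)$ being an LCP of codes over $\FF_{q^m}$.

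The main subtlety — and the step I would be most careful about — is the claim $\CC=\Tilde\CC$, i.e. that the $\FF_q$-span of the rows of $G$ in Equation~\eqref{eq-qq} equals the $\FF_{q^m}$-span of the rows of $\Tilde G$; the paper asserts this after Equation~\eqref{eq-qq} but I would want to state it explicitly as the hinge of the argument and note that it also guarantees the dimension bookkeeping ($\dim_{\FF_q}\CC=m\cdot\dim_{\FF_{q^m}}\Tilde\CC$) works out. Everything else is a direct appeal to Proposition~\ref{prop-4q} for one direction and to the definitions plus this set-equality for the other. I would present the proof in essentially three lines: ($\Rightarrow$) cite Proposition~\ref{prop-4q}; ($\Leftarrow$) use $\CC=\Tilde\CC$, $\CD=\Tilde\CD$ to transfer $\CC\cap\CD=\{0\}$ and the dimension count back to $\Tilde\CC,\Tilde\CD$, concluding $(\Tilde\CC,\Tilde\CD)$ is LCP.
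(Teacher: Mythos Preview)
Your argument is correct. For the forward direction you invoke Proposition~\ref{prop-4q}, which is exactly what the paper does. For the converse you take a different, and in fact more transparent, route than the paper: you use the set equality $\CC=\Tilde{\CC}$ and $\CD=\Tilde{\CD}$ (stated immediately after Equation~\eqref{eq-qq}) to transport $\CC\cap\CD=\{0\}$ and the $\FF_q$-dimension count directly back to $\Tilde{\CC},\Tilde{\CD}$ over $\FF_{q^m}$. The paper instead cites Theorem~\ref{th-1zx} for the converse, whose conclusion is that $(Tr(\CC),Tr(\CD))$ is an LCP over $\FF_q$ --- a statement about trace codes in $\FF_q^n$, not immediately about $(\Tilde{\CC},\Tilde{\CD})$ in $\FF_{q^m}^n$ --- so an extra step linking these is left implicit there. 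Your direct use of $\CC=\Tilde{\CC}$ bypasses that detour entirely and makes the converse essentially a one-line consequence of the definitions; this is the cleaner way to close the equivalence.
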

\begin{proof}
Combining Proposition~\ref{prop-4q} and Theorem~\ref{th-1zx}, we have the required result. 
\end{proof}
%Converse of Proposition~\ref{prop-4q} is not correct in general. For this, we demonstrate an example as follows.
%\begin{example}
%Take $\FF_q=\FF_2$ and $\FF_{q^m}=\FF_{4}$. Let $C$ and $D$ be two $\FF_2$-linear additive codes over $\FF_4$ of length $3$ with generator matrices
% $G_1=\left(\begin{array}{ccc}
%     1 & 1& 0 \\
%     \omega & 0 & \omega \\
%     0 & \omega &\omega 
% \end{array}\right)$ and $G_2=\left(\begin{array}{ccc}
%     1 & 0& 1 \\
%     1 & 1 & 1 \\
%     \omega & \omega & \omega
% \end{array}\right),$ respectively, where $\omega$ is a primitive root of $\FF_4$.
%Here, $(C,D)$ is ACP over $\FF_4$.
%It is easy to see that $\Tilde{C}\cap \Tilde{D}\neq \{0\}$, as $(\omega~0~\omega)\in\Tilde{C}\cap \Tilde{D}$. Therefore, $(\Tilde{C}, \Tilde{D})$ is not an LCP of codes over $\FF_4$.
%\end{example}

For $\ba=(a_1,a_2,\ldots,a_n)\in \FF_{q^m}^n$ and a code $\CC$, define $${\ba}\CC:=\{(a_1c_1,a_2c_2,\ldots,a_nc_n)~|~(c_1,c_2,\ldots,c_n)\in \CC)\}.$$ Note that, for $\ba \in \left(\FF_{q^m}^*\right)^n$, ${\ba}\CC$ is an $\FF_q$-linear additive code if $C$ is an $\FF_q$-linear additive code. Recall that a linear code $\Tilde{\CC}:=[n,k,d]$ over $\FF_{q^m}$ is MDS if $d=n-k+1$. 
\begin{thm}\label{thm-qw}
Let $\Tilde{\CC}$, $\Tilde{\CD}$ be two linear codes of length $n$ over $\FF_{q^m}$.
Let $\CC$, $\CD$ be the codes constructed from $\Tilde{\CC}$, $\Tilde{\CD}$, respectively, as presented in Equation~\eqref{eq-qq}.
Assume that at least one of codes $\Tilde{\CC}, \Tilde{\CD}$ is MDS and $\dim_{\FF_{q^m}}(\Tilde{\CC})+\dim_{\FF_{q^m}}(\Tilde{\CD}) = n$. Then there exists $\ba\in \left(\FF_{q^m}^*\right)^n$ such that a pair $({\ba}\CC,\CD)$ is an ACP of codes.   
\end{thm}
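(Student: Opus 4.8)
The plan is to strip away the additive-code packaging, reduce the assertion to a statement about ordinary linear codes over $\FF_{q^m}$, and then build the scaling vector $\ba$ by a Combinatorial Nullstellensatz argument applied to a single determinant. First I would unwind the construction~\eqref{eq-qq}: as $\FF_q$-subspaces of $\FF_{q^m}^n$ one has $\CC=\Tilde{\CC}$ and $\CD=\Tilde{\CD}$, and for every $\ba\in\left(\FF_{q^m}^*\right)^n$ the set $\ba\CC=\ba\Tilde{\CC}$ is again an $\FF_{q^m}$-linear code, of the same $\FF_{q^m}$-dimension $k:=\dim_{\FF_{q^m}}(\Tilde{\CC})$, with generator matrix $\Tilde{G}_1D_{\ba}$, where $\Tilde{G}_1$ generates $\Tilde{\CC}$ and $D_{\ba}=\mathrm{diag}(a_1,\dots,a_n)$. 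Hence $\dim_{\FF_q}(\ba\CC)+\dim_{\FF_q}(\CD)=mk+m(n-k)=mn$ follows automatically from the hypothesis $\dim_{\FF_{q^m}}(\Tilde{\CC})+\dim_{\FF_{q^m}}(\Tilde{\CD})=n$, so $(\ba\CC,\CD)$ is an ACP of codes if and only if $\ba\Tilde{\CC}\cap\Tilde{\CD}=\{0\}$, that is (by the same dimension count) if and only if $(\ba\Tilde{\CC},\Tilde{\CD})$ is an LCP of codes over $\FF_{q^m}$. Thus it suffices to exhibit $\ba\in\left(\FF_{q^m}^*\right)^n$ for which $(\ba\Tilde{\CC},\Tilde{\CD})$ is an LCP.

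Next I would turn this into a non-vanishing condition. Let $\Tilde{G}_2$ be a generator matrix of $\Tilde{\CD}$, an $(n-k)\times n$ matrix of rank $n-k$. Since the two dimensions add to $n$, the pair $(\ba\Tilde{\CC},\Tilde{\CD})$ is an LCP exactly when the $n\times n$ matrix $\begin{pmatrix}\Tilde{G}_1D_{\ba}\\ \Tilde{G}_2\end{pmatrix}$ is invertible; this is the standard complementarity criterion, and also the case of identity $\pi$, $\sigma$, $M$ in Proposition~\ref{prop-qq}. Writing $(\Tilde{G}_i)_S$ for the submatrix of $\Tilde{G}_i$ on the columns indexed by $S$ and $S^{c}$ for the complement of $S$ in $\{1,\dots,n\}$, and expanding the determinant by multilinearity in its columns, one gets
\[
\det\begin{pmatrix}\Tilde{G}_1D_{\ba}\\ \Tilde{G}_2\end{pmatrix}=\sum_{\substack{T\subseteq\{1,\dots,n\}\\ |T|=k}}\varepsilon_T\Big(\prod_{j\in T}a_j\Big)\det\big((\Tilde{G}_1)_T\big)\,\det\big((\Tilde{G}_2)_{T^{c}}\big)
\]
for suitable signs $\varepsilon_T\in\{+1,-1\}$. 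Call this polynomial $P(a_1,\dots,a_n)$; it has degree at most $1$ in each variable.

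The heart of the matter is that $P$ is not the zero polynomial, and this is exactly where the MDS hypothesis enters. If $\Tilde{\CC}$ is MDS then every $k$ columns of $\Tilde{G}_1$ are linearly independent, so $\det\big((\Tilde{G}_1)_T\big)\neq0$ for all $T$; since $\Tilde{G}_2$ has rank $n-k$ there is a set $U$ with $|U|=n-k$ and $\det\big((\Tilde{G}_2)_U\big)\neq0$, whence the coefficient of $\prod_{j\in U^{c}}a_j$ in $P$ is nonzero. Symmetrically, if $\Tilde{\CD}$ is MDS then every $n-k$ columns of $\Tilde{G}_2$ are independent, so $\det\big((\Tilde{G}_2)_{T^{c}}\big)\neq0$ for all $T$, and choosing $T$ with $\det\big((\Tilde{G}_1)_T\big)\neq0$ (possible as $\Tilde{G}_1$ has rank $k$) again exhibits a nonzero coefficient of $P$. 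So $P\not\equiv0$. As $P$ is multilinear and $|\FF_{q^m}^{*}|=q^m-1\ge3\ge2$, the Combinatorial Nullstellensatz — or the trivial induction on $n$ using that an affine univariate polynomial with two distinct roots vanishes identically — produces $\ba\in\left(\FF_{q^m}^*\right)^n$ with $P(\ba)\neq0$. For that $\ba$ the matrix $\begin{pmatrix}\Tilde{G}_1D_{\ba}\\ \Tilde{G}_2\end{pmatrix}$ is invertible, hence $(\ba\Tilde{\CC},\Tilde{\CD})$ is an LCP, and therefore $(\ba\CC,\CD)$ is an ACP of codes.

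I expect the only delicate points to be the bookkeeping in the reduction step — checking that $\ba\CC$ really is the additive code attached to an $\FF_{q^m}$-linear code of dimension $k$, so that the ACP requirement collapses to the single condition $\ba\Tilde{\CC}\cap\Tilde{\CD}=\{0\}$ — and the passage from ``$P\not\equiv0$ as a polynomial'' to ``$P$ takes a nonzero value on $\left(\FF_{q^m}^*\right)^n$'', which genuinely uses $|\FF_{q^m}^{*}|\ge2$ (valid since $m\ge2$ forces $q^m\ge4$) rather than merely $P\neq0$. The complementarity criterion and the multilinear expansion are routine, and the MDS assumption is precisely what guarantees a surviving coefficient.
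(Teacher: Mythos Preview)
Your proof is correct, and at the structural level it matches the paper: both reduce the ACP claim for $(\ba\CC,\CD)$ to the LCP claim for $(\ba\Tilde{\CC},\Tilde{\CD})$ over $\FF_{q^m}$, and then produce a suitable $\ba\in(\FF_{q^m}^{*})^n$. The paper accomplishes the reduction by invoking Proposition~\ref{prop-4q}, and handles the existence of $\ba$ by a one-line citation of \cite[Theorem~5.4]{BDM23} together with the observation $q^m\ge 4$; you instead unpack the reduction by hand and replace the black-box citation with a self-contained Combinatorial Nullstellensatz argument on the multilinear determinant polynomial $P(a_1,\dots,a_n)$.

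The genuine difference is therefore in the existence step. Your Laplace expansion of $\det\begin{pmatrix}\Tilde{G}_1D_{\ba}\\ \Tilde{G}_2\end{pmatrix}$ makes transparent exactly how the MDS hypothesis is consumed: it forces every $k\times k$ minor of $\Tilde{G}_1$ (or every $(n-k)\times(n-k)$ minor of $\Tilde{G}_2$) to be nonzero, so that the surviving minor on the other matrix singles out a nonzero coefficient of $P$. The paper's route is shorter but opaque, delegating this mechanism to the external reference; your route is longer but elementary and explains why ``one of the two codes is MDS'' is the right hypothesis. Both use $q^m\ge 4$ (equivalently $|\FF_{q^m}^{*}|\ge 2$) in the same place, to pass from $P\not\equiv 0$ to a nonvanishing value on $(\FF_{q^m}^{*})^n$.
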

\begin{proof}
Since $q^m\geq 4$, by \cite[Theorem 5.4]{BDM23}, there exists $\ba\in \left(\FF_{q^m}^*\right)^n$ such that $(\ba\Tilde{\CC}, \Tilde{\CD})$ is an LCP of codes over $\FF_{q^m}$. By Proposition \ref{prop-4q}, we obtain that  $({\ba}\CC,\CD)$ is an ACP of codes.  
\end{proof}  
Let $\FF_q$ be a finite field with $q\geq 3$, $x$ be transcendental over $\FF_q$ and $\mathcal{P}_k = \{f\in \FF_q[x] : \deg(f) \leq k-1\}$. For $\textbf{b}=(\alpha_1,\alpha_2,\ldots,\alpha_n) \in \FF_q^n$, where $\alpha_i$s are distinct elements in $\FF_q$, let $RS_k(\textbf{b})=\{(f(\alpha_1),f(\alpha_2),\ldots,f(\alpha_n))~|~f\in\mathcal{P}_k\}$ be a $k$-dimensional Reed-Solomon code which is MDS. Then $RS_k(\textbf{b})$ is $[n,k,n-k+1]$ code and $RS_k(\textbf{b})^\perp$ is a $[n,n-k,k+1]$ code.

Denote $\widehat{RS_k(\textbf{b})}$ be the $\FF_{q}$-linear additive code corresponding linear code $RS_k(\textbf{b})$ over $\FF_{q^m}$ and $\widehat{RS_k(\textbf{b})^\perp}$ be the $\FF_{q}$-linear additive code corresponding linear code $RS_k(\textbf{b})^\perp$over $\FF_{q^m}$.
\begin{example}
From above discussion, $RS_k(\textbf{b})=\{(f(\alpha_1),f(\alpha_2),\ldots,f(\alpha_n))~|~f\in\mathcal{P}_k\}$ be a $k$-dimensional Reed-Solomon code. Then by Theorem \ref{thm-qw}, there exists $\ba\in \left(\FF_{q^m}^*\right)^n$ such that a pair $({\ba}\widehat{RS_k(\textbf{b})},\widehat{RS_k(\textbf{b})^\perp})$ is an ACP of codes over $\FF_{q^m}$.
\end{example}
Now, we will present a numerical example. 
\begin{example}
Let $\FF_{25}$ be a finite field, $x$ be transcendental over $\FF_{25}$ and $\mathcal{P}_2 = \{f\in \FF_q[x] : \deg(f) \leq 1\}$. For $\textbf{b}=(1,2,3,4) \in \FF_{25}^4$, let $RS_2(\textbf{b})=\{(f(1),f(2),f(3),f(4))~|~f\in\mathcal{P}_2\}$ be a $2$-dimensional Reed-Solomon code which is MDS. Then $RS_2(\textbf{b})$ is $[4,2,3]$ code and $RS_2(\textbf{b})^\perp$ is a $[4,2,3]$ code with generator matrix $G=\left(\begin{array}{cccc}
1 & 1 & 1 & 1 \\
1 & 2 & 3 & 4 
\end{array}\right)$ and parity check matrix $H=\left(\begin{array}{cccc}
1 & 0 & 2 & 2 \\
0 & 1 & 3 & 1 
\end{array}\right)$.
Set $\ba=(2,1,1,1)$, then $(\ba RS_2(\textbf{b}),RS_2(\textbf{b})^\perp)$ forms an LCP over $\FF_{25}$. It is easy to see that $({\ba}\widehat{RS_2(\textbf{b})},\widehat{RS_2(\textbf{b})^\perp})$ is an ACP of codes over $\FF_{25}$.
\end{example}
Let $\Tilde{C}$ be a linear code over $\FF_{q^m}$ of length $n$ with generator matrix $\Tilde{G}$. We define two different expansions of $\Tilde{\CC}$ as $\Tilde{\CC}_{ex_1}$ and $\Tilde{\CC}_{ex_2}$ generated by the generator matrices
$$\Tilde{G}_{ex_1}=\left(\begin{array}{ccc}
\lambda & P \\
0 & \Tilde{G} 
\end{array}\right) \text{ and }
\Tilde{G}_{ex_2}=\left(\begin{array}{ccc}
P' & \Tilde{G} 
\end{array}\right),$$ 
respectively, where $\lambda\in\FF_{q^m}, P\in\FF^n_{q^m}$ and $P'^\top \in \FF_{q^m}^{k}$.
We now construct an ACP of codes from the expansions of an ACP of codes.
\begin{thm}\label{thm-tt}
Let $\Tilde{\CC}$ and $\Tilde{\CD}$ be two linear codes over $\FF_{q^m}$ such that a pair $(\Tilde{\CC},\Tilde{\CD})$ is an LCP of codes (using identity $\pi$, $\sigma$ and $M$ in Proposition~\ref{prop-qq}).
Then there exist $\lambda\in\FF_{q^m}^*$ and $P\in\FF_{q^m}^n$ such that the pair $(\Tilde{\CC}_{ex_1},\Tilde{\CD}_{ex_2})$ is an LCP of codes over $\FF_{q^m}$.
\end{thm}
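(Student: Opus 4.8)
The plan is to verify the two conditions defining an LCP of codes for the pair $(\Tilde{\CC}_{ex_1},\Tilde{\CD}_{ex_2})$ of length $n+1$: that the $\FF_{q^m}$-dimensions add up to $n+1$, and that $\Tilde{\CC}_{ex_1}\cap\Tilde{\CD}_{ex_2}=\{0\}$ (equivalently, that the sum is all of $\FF_{q^m}^{n+1}$). Write $k_1=\dim_{\FF_{q^m}}(\Tilde{\CC})$ and $k_2=\dim_{\FF_{q^m}}(\Tilde{\CD})$; since $(\Tilde{\CC},\Tilde{\CD})$ is an LCP of codes, $k_1+k_2=n$. Fix generator matrices $\Tilde{G}_1$ of $\Tilde{\CC}$ and $\Tilde{G}_2$ of $\Tilde{\CD}$, so that $\Tilde{\CC}_{ex_1}$ is generated by $\left(\begin{smallmatrix}\lambda & P\\ \mathbf 0 & \Tilde{G}_1\end{smallmatrix}\right)$ and $\Tilde{\CD}_{ex_2}$ by $\left(\,P'\mid \Tilde{G}_2\,\right)$, exactly the matrices $\Tilde{G}_{ex_1},\Tilde{G}_{ex_2}$ introduced just before this statement.

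First I would dispose of the dimension count, which in fact holds for every admissible choice of parameters: as soon as $\lambda\neq 0$, the first row of $\left(\begin{smallmatrix}\lambda & P\\ \mathbf 0 & \Tilde{G}_1\end{smallmatrix}\right)$ is independent of the remaining $k_1$ rows (these are zero in the new coordinate), so this matrix has rank $k_1+1$; and adjoining the column $P'$ to the full row-rank matrix $\Tilde{G}_2$ leaves a matrix of rank $k_2$. Hence $\dim_{\FF_{q^m}}(\Tilde{\CC}_{ex_1})+\dim_{\FF_{q^m}}(\Tilde{\CD}_{ex_2})=(k_1+1)+k_2=n+1$, the length of the extended codes. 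So the real content of the theorem is to pick $\lambda$ and $P$ forcing the intersection to be trivial.

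The key observation is that $P=\mathbf 0$ already does the job, for every $\lambda\in\FF_{q^m}^{*}$ and every $P'$. With $P=\mathbf 0$, a word of $\Tilde{\CC}_{ex_1}$ has the form $(a\lambda,\mathbf c)$ with $a\in\FF_{q^m}$ and $\mathbf c\in\Tilde{\CC}$, while a word of $\Tilde{\CD}_{ex_2}$ has the form $(\mathbf w P',\mathbf d)$ with $\mathbf d=\mathbf w\Tilde{G}_2\in\Tilde{\CD}$. If these coincide, comparing the last $n$ coordinates gives $\mathbf c=\mathbf d\in\Tilde{\CC}\cap\Tilde{\CD}=\{0\}$, so $\mathbf c=\mathbf d=\mathbf 0$; from $\mathbf w\Tilde{G}_2=\mathbf 0$ and the full row rank of $\Tilde{G}_2$ we get $\mathbf w=\mathbf 0$, whence the new coordinate $\mathbf w P'$ equals $0$; and since it also equals $a\lambda$ with $\lambda\neq 0$, we get $a=0$. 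Thus the common word is $\mathbf 0$, and with the dimension count this yields $\Tilde{\CC}_{ex_1}\oplus_{\FF_{q^m}}\Tilde{\CD}_{ex_2}=\FF_{q^m}^{n+1}$. There is no genuine obstacle in this route; the only coordinate where a choice is forced is the new one, and taking $P=\mathbf 0$ decouples it entirely.

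Finally, I would note two alternatives, mainly for context. One can instead invoke Proposition~\ref{prop-qq} with identity $\pi,\sigma,M$: the stacked matrix $\left(\begin{smallmatrix}\lambda & P\\ \mathbf 0 & \Tilde{G}_1\\ P' & \Tilde{G}_2\end{smallmatrix}\right)$ is $(n+1)\times(n+1)$, and for $P=\mathbf 0$ a Laplace expansion along the first row gives determinant $\lambda\cdot\det\!\left(\begin{smallmatrix}\Tilde{G}_1\\ \Tilde{G}_2\end{smallmatrix}\right)\neq 0$, since $(\Tilde{\CC},\Tilde{\CD})$ being an LCP makes $\left(\begin{smallmatrix}\Tilde{G}_1\\ \Tilde{G}_2\end{smallmatrix}\right)$ invertible. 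If instead one insists on a nonzero $P$ (to keep freedom for tuning the minimum distance of the building-up code), the same row reduction shows the stacked matrix is invertible precisely when $\lambda\neq \mathbf r_2 P'$, where $(\mathbf r_1,\mathbf r_2)=P\left(\begin{smallmatrix}\Tilde{G}_1\\ \Tilde{G}_2\end{smallmatrix}\right)^{-1}$ with $\mathbf r_2\in\FF_{q^m}^{k_2}$; this rules out at most one nonzero value of $\lambda$, so such a $\lambda$ exists whenever $q^m\geq 3$, and for $q^m=2$ one reverts to $P=\mathbf 0$.
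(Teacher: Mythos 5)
Your proof is correct, and the statement as written only asks for \emph{some} $\lambda\in\FF_{q^m}^*$ and $P\in\FF_{q^m}^n$, so the witness $P=\mathbf 0$ is legitimate and makes the whole thing almost immediate: the last $n$ coordinates land in $\Tilde{\CC}\cap\Tilde{\CD}=\{0\}$, full row rank of $\Tilde{G}_2$ kills $\mathbf w$, and $\lambda\neq 0$ kills the remaining scalar. The paper takes a different and somewhat heavier route: it fixes a \emph{nonzero} $P$ from the outset, first shows that the subcode generated by $(\,0\mid\Tilde{G}_1\,)$ meets $\Tilde{\CD}_{ex_2}$ trivially (using that the stacked matrix $\bigl(\begin{smallmatrix}\Tilde{G}_1\\ \Tilde{G}_2\end{smallmatrix}\bigr)$ has rank $n$), deduces $\dim_{\FF_{q^m}}(\Tilde{\CC}_{ex_1}\cap\Tilde{\CD}_{ex_2})\leq 1$, and then chooses $\lambda$ so that the full $(n+1)\times(n+1)$ stacked matrix has nonzero determinant --- but it only \emph{asserts} that such a $\lambda$ exists, without argument. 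Your third paragraph actually supplies the missing justification for that step: row-reducing against the invertible block $\bigl(\begin{smallmatrix}\Tilde{G}_1\\ \Tilde{G}_2\end{smallmatrix}\bigr)$ shows the determinant vanishes for exactly one value $\lambda=\mathbf r_2P'$, so a nonzero admissible $\lambda$ exists whenever $q^m\geq 3$ (automatic here since $m\geq 2$). In short, your $P=\mathbf 0$ argument is more elementary and self-contained; your general-$P$ analysis is strictly more informative than the paper's proof and would be the right thing to cite if one wants the extension with a prescribed nonzero $P$, as in building-up constructions.
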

\begin{proof}
Let $\Tilde{\CC}$ and $\Tilde{\CD}$ be two linear codes over $\FF_{q^m}$ with generator matrices $\Tilde{G}_{1}$ and $\Tilde{G}_{2}$, respectively. 
Consider the codes $\Tilde{\CC}_{ex_1}$ and $\Tilde{\CD}_{ex_2}$ with generator matrices $\Tilde{G}_{ex_1}=\left(\begin{array}{ccc}
\lambda & P \\
0 & \Tilde{G}_1 
\end{array}\right)$ and $\Tilde{G}_{ex_2}=\left(\begin{array}{ccc}
P'^\top & \Tilde{G}_2 
\end{array}\right)$, respectively, where $\lambda\in\FF_{q^m}$, non-zero $P$ is an $(1\times n)$ matrix over $\FF_{q^m}$ and $P'$ is an $1\times(n-k)$ sub-matrix of $P$. Since, a pair $(\Tilde{\CC},\Tilde{\CD})$ is an LCP of codes over $\FF_{q^m}$, then by Proposition~\ref{prop-qq}, $rank\left(\begin{array}{ccc}
\Tilde{G}_1 \\
\Tilde{G}_2 
\end{array}\right)=n$. Hence, 
\begin{equation}\label{eq-equation}
rank\left(\begin{array}{ccc}
0 & \Tilde{G}_1 \\
P'^\top & \Tilde{G}_2 
\end{array}\right)=n.   
\end{equation} 
Let $\Tilde{C}_1$ be a linear code over $\FF_{q^m}$ generated by 
$\left(\begin{array}{ccc}
0 & \Tilde{G}_1 
\end{array}\right)$.
It is easy to see that $\Tilde{\CC}_1\cap \Tilde{\CD}_{ex_2}=\{0\}$ by Equation~\eqref{eq-equation}. Since, $\Tilde{\CC}_1+\Tilde{\CD}_{ex_2}\subseteq \Tilde{\CC}_{ex_1}+\Tilde{\CD}_{ex_2}$, $\dim_{\FF_{q^m}}(\Tilde{\CC}_1+\Tilde{\CD}_{ex_2})\leq\dim_{\FF_{q^m}}(\Tilde{\CC}_{ex_1}+\Tilde{\CD}_{ex_2})$\\
 $\implies \dim_{\FF_{q^m}}(\Tilde{\CC}_1)+\dim_{\FF_{q^m}}(\Tilde{\CD}_{ex_2})-\dim_{\FF_{q^m}}(\Tilde{\CC}_1\cap \Tilde{\CD}_{ex_2})\leq\dim_{\FF_{q^m}}(\Tilde{\CC}_{ex_1})+\dim_{\FF_{q^m}}(\Tilde{\CD}_{ex_2})-\dim_{\FF_{q^m}}(\Tilde{\CC}_{ex_1}\cap \Tilde{\CD}_{ex_2})\implies\dim_{\FF_{q^m}}(\Tilde{\CC}_{ex_1}\cap \Tilde{\CD}_{ex_2})\leq 1$. 
 
Further, we choose $\lambda \in\FF_{q^m}$ such that 
$\det\left(\begin{array}{ccc}
\Tilde{G}_{ex_1} \\
\Tilde{G}_{ex_2} 
\end{array}\right) = \det\left(\begin{array}{ccc}
\lambda & P \\
0 & \Tilde{G}_1 \\
P'^\top & \Tilde{G}_2 
\end{array}\right) \neq 0$.
Note that there exists such $\lambda$ satisfying the condition. Then from Proposition~\ref{prop-qq}, we have that $\left(\Tilde{\CC}_{ex_1},\Tilde{\CD}_{ex_2}\right)$ is an LCP of codes over $\FF_{q^m}$.
\end{proof}
We denote $\CC_{ex_1}$ be an $\FF_q$-linear additive code over $\FF_{q^m}$ corresponding to a linear code $\Tilde{\CC}_{ex_1}$ over $\FF_{q^m}$ (see Equation~\eqref{eq-qq}).
\begin{cor}\label{cor-qq}
Let $\Tilde{\CC}$ and $\Tilde{\CD}$ be two linear codes over $\FF_{q^m}$ such that a pair $\left(\Tilde{\CC},\Tilde{\CD}\right)$ is an LCP of codes (using identity $\pi$, $\sigma$ and $M$ in Proposition~\ref{prop-qq}). Let $\CC_{ex_1}$, $\CD_{ex_2}$ be the codes constructed from $\Tilde{\CC}_{ex_1}$, $\Tilde{\CD}_{ex_1}$, respectively, as presented in Equation~\eqref{eq-qq}. Then the pair $\left(\CC_{ex_1},\CD_{ex_2}\right)$ is an ACP of codes over $\FF_{q^m}$.
\end{cor}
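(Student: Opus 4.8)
The plan is to deduce Corollary~\ref{cor-qq} by composing the two preceding results, with no new computation needed: Theorem~\ref{thm-tt} lifts the LCP property from $(\Tilde{\CC},\Tilde{\CD})$ to the expanded linear codes $(\Tilde{\CC}_{ex_1},\Tilde{\CD}_{ex_2})$ over $\FF_{q^m}$, and Proposition~\ref{prop-4q} turns an LCP of linear codes over $\FF_{q^m}$ into an ACP of the associated $\FF_q$-linear additive codes obtained via the construction in Equation~\eqref{eq-qq}.

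First I would apply Theorem~\ref{thm-tt} to the LCP $(\Tilde{\CC},\Tilde{\CD})$ (taken with identity $\pi$, $\sigma$, $M$): it provides $\lambda\in\FF_{q^m}^*$ and a nonzero $P\in\FF_{q^m}^n$ for which $(\Tilde{\CC}_{ex_1},\Tilde{\CD}_{ex_2})$ is an LCP of linear codes of length $n+1$ over $\FF_{q^m}$. I would fix this choice of $\lambda$ and $P$, so that the expanded codes $\Tilde{\CC}_{ex_1}$, $\Tilde{\CD}_{ex_2}$ — and hence the additive codes $\CC_{ex_1}$, $\CD_{ex_2}$ built from them through Equation~\eqref{eq-qq} — are unambiguously defined.

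Next I would invoke Proposition~\ref{prop-4q} with $\Tilde{\CC}_{ex_1}$ and $\Tilde{\CD}_{ex_2}$ playing the roles of $\Tilde{\CC}$ and $\Tilde{\CD}$ there. Since the additive codes produced from $\Tilde{\CC}_{ex_1}$ and $\Tilde{\CD}_{ex_2}$ by the rule of Equation~\eqref{eq-qq} are by definition exactly $\CC_{ex_1}$ and $\CD_{ex_2}$, the proposition yields immediately that $(\CC_{ex_1},\CD_{ex_2})$ is an ACP of codes over $\FF_{q^m}$. Unwinding Proposition~\ref{prop-4q}, this amounts to the two facts that $\CC_{ex_1}\cap\CD_{ex_2}=\{0\}$, inherited from $\Tilde{\CC}_{ex_1}\cap\Tilde{\CD}_{ex_2}=\{0\}$ under the identification $\CC_{ex_1}=\Tilde{\CC}_{ex_1}$, $\CD_{ex_2}=\Tilde{\CD}_{ex_2}$ as subsets of $\FF_{q^m}^{n+1}$, and that $\dim_{\FF_q}(\CC_{ex_1})+\dim_{\FF_q}(\CD_{ex_2})=m\bigl(\dim_{\FF_{q^m}}(\Tilde{\CC}_{ex_1})+\dim_{\FF_{q^m}}(\Tilde{\CD}_{ex_2})\bigr)=m(n+1)$.

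Since the argument is essentially bookkeeping, I do not expect a genuine obstacle; the only point requiring attention is that the $ex_1$ and $ex_2$ expansions increase the length from $n$ to $n+1$, so one must check that $\dim_{\FF_{q^m}}(\Tilde{\CC}_{ex_1})+\dim_{\FF_{q^m}}(\Tilde{\CD}_{ex_2})=(\dim_{\FF_{q^m}}\Tilde{\CC}+1)+\dim_{\FF_{q^m}}\Tilde{\CD}=n+1$ equals the new length, which is exactly what makes Proposition~\ref{prop-4q} applicable here. (It is also worth flagging the evident typo in the statement: the codes should be constructed from $\Tilde{\CC}_{ex_1}$ and $\Tilde{\CD}_{ex_2}$, not $\Tilde{\CD}_{ex_1}$.)
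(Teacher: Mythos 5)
Your proposal matches the paper's own proof exactly: the paper also deduces Corollary~\ref{cor-qq} by combining Theorem~\ref{thm-tt} (to lift the LCP to $(\Tilde{\CC}_{ex_1},\Tilde{\CD}_{ex_2})$) with Proposition~\ref{prop-4q} (to pass to the additive codes via Equation~\eqref{eq-qq}). Your additional bookkeeping on dimensions and your note on the $\Tilde{\CD}_{ex_1}$ versus $\Tilde{\CD}_{ex_2}$ typo are both correct and only make the argument more explicit than the paper's one-line proof.
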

\begin{proof}
By Theorem~\ref{thm-tt} with combining Proposition~\ref{prop-4q}, we have our results.    
\end{proof}
\begin{example}
Let $\Tilde{\CC}$ and $\Tilde{\CD}$ be two linear codes over $\FF_{8^2}$ with generator matrices 

$G_1=\left(\begin{array}{ccccccc}
1 & 1 & 1 & 1 & 1 & 1 & 1 \\
1 & \omega & \omega^2 & \omega^3 & \omega^4 & \omega^5 & \omega^6 \\
1 & \omega^6 & \omega^5 & \omega^4 & \omega^3 & \omega^2 & \omega \\
1 & \omega^2 & \omega^4 & \omega^6 & \omega & \omega^3 & \omega^5 \\
1 & \omega^5 & \omega^3 & \omega & \omega^6 & \omega^4 & \omega^2 
\end{array}\right)$    and 
$G_2=\left(\begin{array}{ccccccc}
1 & 0 & 1 & \omega^6 & \omega^4 & \omega^4 & \omega^6 \\
0 & 1 & \omega^6 & \omega^4 & \omega^4 & \omega^6 & 1 
\end{array}\right)$, respectively, where $\omega$ is $7$-th root of unity.
It can be shown that $\left(\Tilde{\CC},\Tilde{\CD}\right)$ is an LCP of codes over $\FF_{8^2}$. By Theorem~\ref{thm-tt}, $(\Tilde{\CC}_{ex_1},\Tilde{\CD}_{ex_2})$ is an LCP of codes over $\FF_{8^2}$. However, a generator matrices of $\Tilde{\CC}$ and $\Tilde{\CD}$ of the form 
$G_{ex_1}=\left(\begin{array}{cccccccc}
\omega^5 & \omega^6 & \omega^3 & \omega^4 & \omega & \omega^2 & 1 & 0 \\
0 & 1 & 1 & 1 & 1 & 1 & 1 & 1 \\
0 & 1 & \omega & \omega^2 & \omega^3 & \omega^4 & \omega^5 & \omega^6 \\
0 & 1 & \omega^6 & \omega^5 & \omega^4 & \omega^3 & \omega^2 & \omega \\
0 & 1 & \omega^2 & \omega^4 & \omega^6 & \omega & \omega^3 & \omega^5 \\
0 & 1 & \omega^5 & \omega^3 & \omega & \omega^6 & \omega^4 & \omega^2 
\end{array}\right)$    and 
$G_{ex_2}=\left(\begin{array}{cccccccc}
\omega^6 & 1 & 0 & 1 & \omega^6 & \omega^4 & \omega^4 & \omega^6 \\
\omega^3 & 0 & 1 & \omega^6 & \omega^4 & \omega^4 & \omega^6 & 1 
\end{array}\right)$. It is easy to see that $\left(\CC_{ex_1},\CD_{ex_2}\right)$ is an ACP of codes over $\FF_{8^2}$ which is coincide with Corollary~\ref{cor-qq}.
\end{example}
We further derive a construction of ACP of codes over $\FF_{q^m}$ from given $m$ numbers of LCP of codes over $\FF_{q}$ as follows.
\begin{thm}\label{th-000}
Let $\CC_i : [n, k_i]$ and $\CD_i : [n, n-k_i]$ be two linear code over $\FF_q$, for $0\leq i \leq m-1$. If $\alpha_i$'s are distinct elements of $\FF_{q^m}^*$ such that $\{\alpha_0,\alpha_1,\ldots,\alpha_{m-1}\}$ are linearly independent over $\FF_q$ and $(\CC_i,\CD_i)$ is an LCP for $0\leq i \leq m-1$, then the pair $\left(\sum\limits_{i=0}^{m-1}\alpha_i \CC_i, \sum\limits_{i=0}^{m-1}\alpha_i \CD_i \right)$ is an ACP of codes over $\FF_{q^m}$.
\end{thm}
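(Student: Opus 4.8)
The plan is to verify directly the two defining conditions of an ACP of codes for the pair $\CC := \sum_{i=0}^{m-1}\alpha_i\CC_i$ and $\CD := \sum_{i=0}^{m-1}\alpha_i\CD_i$, namely $\dim_{\FF_q}(\CC)+\dim_{\FF_q}(\CD)=mn$ and $\CC\cap\CD=\{0\}$. The crucial observation, which is used twice, is that since $\alpha_0,\ldots,\alpha_{m-1}$ are $m$ elements of $\FF_{q^m}$ that are linearly independent over $\FF_q$ and $\dim_{\FF_q}(\FF_{q^m})=m$, the set $\{\alpha_0,\ldots,\alpha_{m-1}\}$ is actually an $\FF_q$-basis of $\FF_{q^m}$. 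Hence, reading a vector coordinate by coordinate, every element of $\FF_{q^m}^n$ has a \emph{unique} expression $\sum_{i=0}^{m-1}\alpha_i\mathbf{v}_i$ with $\mathbf{v}_i\in\FF_q^n$; in particular $\CC$ and $\CD$ are $\FF_q$-linear subspaces of $\FF_{q^m}^n$.

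First I would show the sum defining $\CC$ is direct over $\FF_q$: if $\sum_i\alpha_i\mathbf{c}_i=0$ with $\mathbf{c}_i\in\CC_i\subseteq\FF_q^n$, then equating the $j$-th coordinate gives $\sum_i\alpha_i c_{i,j}=0$ with each $c_{i,j}\in\FF_q$, so $\FF_q$-linear independence of the $\alpha_i$ forces $c_{i,j}=0$ for all $i,j$, i.e.\ every $\mathbf{c}_i=0$. Therefore $\dim_{\FF_q}(\CC)=\sum_{i=0}^{m-1}\dim_{\FF_q}(\CC_i)=\sum_{i=0}^{m-1}k_i$, and the identical computation gives $\dim_{\FF_q}(\CD)=\sum_{i=0}^{m-1}(n-k_i)=mn-\sum_{i=0}^{m-1}k_i$. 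Adding the two yields the dimension condition $\dim_{\FF_q}(\CC)+\dim_{\FF_q}(\CD)=mn$.

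Next I would handle the intersection. Suppose $\mathbf{x}\in\CC\cap\CD$, so $\mathbf{x}=\sum_i\alpha_i\mathbf{c}_i=\sum_i\alpha_i\mathbf{d}_i$ with $\mathbf{c}_i\in\CC_i$ and $\mathbf{d}_i\in\CD_i$. Then $\sum_i\alpha_i(\mathbf{c}_i-\mathbf{d}_i)=0$ with $\mathbf{c}_i-\mathbf{d}_i\in\FF_q^n$, and the same coordinate-wise independence argument forces $\mathbf{c}_i=\mathbf{d}_i$ for every $i$. But then $\mathbf{c}_i=\mathbf{d}_i\in\CC_i\cap\CD_i=\{0\}$, since $(\CC_i,\CD_i)$ is an LCP over $\FF_q$; hence each $\mathbf{c}_i=\mathbf{d}_i=0$ and $\mathbf{x}=0$. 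Combined with the dimension count from the previous step, this shows $(\CC,\CD)$ is an ACP of codes over $\FF_{q^m}$.

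I do not expect a serious obstacle: the whole argument reduces, via the basis property of $\{\alpha_0,\ldots,\alpha_{m-1}\}$, to elementary coordinate-wise linear independence. The only points needing a little care are to keep the two applications of the independence argument distinct — one producing the direct-sum dimension formula, the other reducing the intersection to the $m$ separate hypotheses $\CC_i\cap\CD_i=\{0\}$ — and to note explicitly that exactly $m$ $\FF_q$-linearly independent elements of $\FF_{q^m}$ necessarily form a basis (so the hypothesis "distinct elements that are linearly independent over $\FF_q$" is exactly what makes the unique-representation step valid).
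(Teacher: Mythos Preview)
Your proposal is correct and follows essentially the same approach as the paper: verify the dimension condition and then show $\CC\cap\CD=\{0\}$ by using the $\FF_q$-linear independence of the $\alpha_i$ to force $\mathbf c_i=\mathbf d_i$, whence $\mathbf c_i=\mathbf d_i\in\CC_i\cap\CD_i=\{0\}$. If anything, your treatment of the dimension step is more careful, since you explicitly argue that the sums $\sum_i\alpha_i\CC_i$ and $\sum_i\alpha_i\CD_i$ are direct, a point the paper asserts without justification.
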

\begin{proof}
Let $\CC=\sum\limits_{i=0}^{m-1}\alpha_i \CC_i$ and $D=\sum\limits_{i=0}^{m-1}\alpha_i \CD_i$. 
%For non-negative integer $i : 0\leq i \leq m-1$, $\alpha_i$'s are all distinct elements of $\FF_{q^m}$ such that $C=\sum\limits_{i=0}^{m-1}\alpha_i C_i$ and $D=\sum\limits_{i=0}^{m-1}\alpha_i D_i$.
Now we need to show that $\dim_{\FF_q}(\CC)+\dim_{\FF_q}(\CD)=mn$ and $\CC\cap \CD=\{0\}$. From the hypothesis, we have $\dim_{\FF_q}(\CC_i)+\dim_{\FF_q}(\CD_i)=n$ for $0\leq i \leq m-1$. Hence $\dim_{\FF_q}(\CC)+\dim_{\FF_q}(\CD)=mn$. To prove the second condition, take $\mathbf x\in \CC\cap \CD$. Then $\mathbf x=\sum\limits_{i=0}^{m-1}\alpha_i \mathbf c_i$ and $\mathbf x=\sum\limits_{i=0}^{m-1}\alpha_i \mathbf d_i$ for some $\mathbf c_i\in \CC_i$ and $\mathbf d_i \in \CD_i$, where $0\leq i \leq m-1$. Therefore,
\begin{equation}\label{equa-qtr}
    \sum\limits_{i=0}^{m-1}\alpha_i \mathbf c_i=\sum\limits_{i=0}^{m-1}\alpha_i \mathbf d_i \implies \sum\limits_{i=0}^{m-1}\alpha_i\left(\mathbf c_i- \mathbf d_i\right)=0.
\end{equation}
Since $\{\alpha_0,\alpha_1,\ldots,\alpha_{m-1}\}$ are linearly independent over $\FF_q$,  $c_i=d_i$, for all $i : 0 \leq i \leq m-1$. Further as $(\CC_i,\CD_i)$ is an LCP for all $i : 0 \leq i \leq m-1$, $\mathbf c_i =\mathbf d_i = 0$.  Hence, $\CC\cap \CD=\{0\}$. 
\end{proof}
\begin{example}
Let $(\CC_1, \CD_1)$ be a pair of linear codes over $\FF_2$ with generator matrices  $G_1=\left(\begin{array}{cccc}
     1 & 1 & 0 & 1 \\
     0 & 1 & 1 & 1 
 \end{array}\right)$ and $G_2=\left(\begin{array}{cccc}
     1 & 1 & 1 & 0 \\
     1 & 0 & 1 & 1 
 \end{array}\right)$. Also, let $(\CC_2, \CD_2)$ be an another pair of linear codes over $\FF_2$ with generator matrices  $G_3=\left(\begin{array}{cccc}
     1 & 1 & 1 & 0 \\
 \end{array}\right)$ and $G_4=\left(\begin{array}{cccc}
     1 & 1 & 0 & 0 \\
     1 & 0 & 1 & 0 \\
     0 & 0 & 0 & 1 \\
 \end{array}\right)$. Choose, $\omega$ be an element of $\FF_4$ such that $\omega^2+\omega+1=0$ and $\{1,\omega\}$ forms a linearly independent set over $\FF_2$. Now, we consider $\CC$ and $\CD$ with generator matrices $G'=\left(\begin{array}{c}
     G_1 \\
     \omega G_3 
 \end{array}\right)$ and $G''=\left(\begin{array}{c}
     G_2 \\
     \omega G_4  
 \end{array}\right)$. It is easy to see that the pair $(\CC, \CD)$ forms an ACP of codes over $\FF_4$ which coincides with the Theorem~\ref{th-000}. 
\end{example}
We next make the following proposition.
\begin{prop}
Let $\Tilde{\CC}:=[n,k]$, $\Tilde{\CD}:=[n,n-k]$ be two linear codes over $\FF_{q^m}$ with a generator matrix $\Tilde{G}$ of $\Tilde{\CC}$ and a parity check matrix $\Tilde{H}$ of $\Tilde{\CD}$. Suppose that $\Tilde{\CC}_{ex_1}$, $\Tilde{\CD}_{ex_2}$ are two linear codes with a generator matrix $\Tilde{G}_{ex_1}=\left(\begin{array}{cc}
     1 &\mathbf d\\
     0 & \Tilde{G} 
\end{array}\right)$ of $\Tilde{\CC}_{ex_1}$ and a parity check matrix $\Tilde{H}_{ex_2}=\left(\begin{array}{cc}
     1 & \mathbf c\\
     0 & \Tilde{H} 
\end{array}\right)$ of $\Tilde{\CD}_{ex_2}$, where $\mathbf c\in \Tilde{C}_{ex_1}$, $\mathbf d\in\Tilde{\CC}^\top_{ex_1}$. 
Let $\CC$, $\CD$ be $\FF_q$-linear additive codes over $\FF_{q^m}$ corresponding to the linear codes $\Tilde{\CC}_{ex_1}$ and $\Tilde{\CD}_{ex_2}$, respectively over $\FF_{q^m}$ (as in Equation~\eqref{eq-qq})
If the pair $(\Tilde{\CC},\Tilde{\CD})$ is an LCP of codes (using identity $\pi$, $\sigma$ and $M$ in Proposition~\ref{prop-qq}), then $(\CC,\CD)$ is an ACP of codes.    
\end{prop}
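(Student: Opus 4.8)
The plan is to deduce the statement from the two tools already available: the parity-check characterization of an LCP of codes over $\FF_{q^m}$ in Proposition~\ref{prop-qq}, and the passage from an LCP of codes over $\FF_{q^m}$ to an ACP of codes over $\FF_{q^m}$ in Proposition~\ref{prop-4q}. By construction (Equation~\eqref{eq-qq}), $\CC$ and $\CD$ are precisely the $\FF_q$-linear additive codes associated to the linear codes $\Tilde{\CC}_{ex_1}$ and $\Tilde{\CD}_{ex_2}$, so it is enough to show that the pair $(\Tilde{\CC}_{ex_1},\Tilde{\CD}_{ex_2})$ is an LCP of codes over $\FF_{q^m}$ with respect to the Euclidean inner product, i.e.\ with identity $\pi$, $\sigma$, $M$; Proposition~\ref{prop-4q} then yields at once that $(\CC,\CD)$ is an ACP of codes. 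Thus the whole argument reduces to the LCP claim, which is essentially a parity-check reformulation of Theorem~\ref{thm-tt} and Corollary~\ref{cor-qq}.

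To prove the LCP claim I would first record the dimensions. The matrix $\Tilde{G}_{ex_1}$ has full row rank $k+1$, since its first row has a leading $1$ independent of the full-rank block $\Tilde{G}$; hence $\dim_{\FF_{q^m}}(\Tilde{\CC}_{ex_1})=k+1$. Likewise $\Tilde{H}_{ex_2}$ has full row rank $k+1$, so as a parity-check matrix it defines a code of dimension $(n+1)-(k+1)=n-k$; that is, $\dim_{\FF_{q^m}}(\Tilde{\CD}_{ex_2})=n-k$. Therefore $\dim_{\FF_{q^m}}(\Tilde{\CC}_{ex_1})+\dim_{\FF_{q^m}}(\Tilde{\CD}_{ex_2})=n+1$, the length, so the dimensional half of the criterion in Proposition~\ref{prop-qq} holds. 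By part~(3) of that proposition (with identity maps) it then remains to show that $\Tilde{H}_{ex_2}\Tilde{G}_{ex_1}^{\top}$ has rank $\mathrm{rank}(\Tilde{G}_{ex_1})=k+1$, i.e.\ that the $(k+1)\times(k+1)$ matrix
\[
\Tilde{H}_{ex_2}\,\Tilde{G}_{ex_1}^{\top}=\left(\begin{array}{cc} 1+\mathbf{c}\,\mathbf{d}^{\top} & \mathbf{c}\,\Tilde{G}^{\top}\\ \Tilde{H}\,\mathbf{d}^{\top} & \Tilde{H}\,\Tilde{G}^{\top}\end{array}\right)
\]
is nonsingular, where I have multiplied out the block forms of $\Tilde{H}_{ex_2}$ and $\Tilde{G}_{ex_1}$.

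Now the hypothesis that $(\Tilde{\CC},\Tilde{\CD})$ is an LCP of codes enters: applying Proposition~\ref{prop-qq} to $\Tilde{\CC},\Tilde{\CD}$ with their matrices $\Tilde{G},\Tilde{H}$ shows that the $k\times k$ block $\Tilde{H}\,\Tilde{G}^{\top}$ is invertible. Taking the Schur complement of this block, $\det\!\big(\Tilde{H}_{ex_2}\Tilde{G}_{ex_1}^{\top}\big)$ equals $\det\!\big(\Tilde{H}\Tilde{G}^{\top}\big)$ times the scalar
\[
1+\mathbf{c}\,\mathbf{d}^{\top}-\mathbf{c}\,\Tilde{G}^{\top}\big(\Tilde{H}\,\Tilde{G}^{\top}\big)^{-1}\Tilde{H}\,\mathbf{d}^{\top},
\]
so the extended pair is LCP exactly when this scalar is nonzero. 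This is where the conditions on $\mathbf{c}$ and $\mathbf{d}$ are used; concretely, it suffices that $\mathbf{d}$ lie in $\Tilde{\CC}$ (the row space of $\Tilde{G}$), for then, writing $\mathbf{d}^{\top}=\Tilde{G}^{\top}\mathbf{e}^{\top}$, one has $\Tilde{G}^{\top}\big(\Tilde{H}\Tilde{G}^{\top}\big)^{-1}\Tilde{H}\,\mathbf{d}^{\top}=\Tilde{G}^{\top}\mathbf{e}^{\top}=\mathbf{d}^{\top}$ and the scalar collapses to $1$; dually it suffices that $\mathbf{c}$ lie in the row space of $\Tilde{H}$, i.e.\ in $\Tilde{\CD}^{\perp}$, where the same cancellation gives $1$ again. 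In either case the scalar is nonzero, so $\Tilde{H}_{ex_2}\Tilde{G}_{ex_1}^{\top}$ is nonsingular, $(\Tilde{\CC}_{ex_1},\Tilde{\CD}_{ex_2})$ is an LCP of codes over $\FF_{q^m}$, and Proposition~\ref{prop-4q} completes the proof.

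I expect the main obstacle to be organizational rather than conceptual: one must match the two extension vectors $\mathbf{c}$ and $\mathbf{d}$ to the right slots in the $2\times 2$ block product so that the Schur-complement scalar is provably equal to $1$ (not merely generically nonzero), and one must be precise about what the membership hypotheses on $\mathbf{c}$ and $\mathbf{d}$ say in terms of the row spaces of $\Tilde{H}$ and $\Tilde{G}$. Everything else---the dimension count and the two invocations of Proposition~\ref{prop-qq} and Proposition~\ref{prop-4q}---is routine; an alternative, matrix-free route is to take a putative nonzero vector of $\Tilde{\CC}_{ex_1}\cap\Tilde{\CD}_{ex_2}$, impose the parity checks of $\Tilde{\CD}_{ex_2}$, and use the hypotheses together with $\Tilde{\CC}\cap\Tilde{\CD}=\{0\}$ to force it to vanish, mirroring the proof of Theorem~\ref{thm-tt}.
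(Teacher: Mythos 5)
Your proposal is correct and shares the paper's skeleton exactly: reduce via Proposition~\ref{prop-4q} to showing $(\Tilde{\CC}_{ex_1},\Tilde{\CD}_{ex_2})$ is an LCP, verify $\dim(\Tilde{\CC}_{ex_1})+\dim(\Tilde{\CD}_{ex_2})=n+1$, and then establish that the $(k+1)\times(k+1)$ product of $\Tilde{G}_{ex_1}$ with $\Tilde{H}_{ex_2}^{\top}$ is nonsingular. Where you genuinely diverge is in that last step. The paper multiplies out the blocks and simply writes the off-diagonal entries $\mathbf d\,\Tilde{H}^{\top}$ and $\Tilde{G}\,\mathbf c^{\top}$ as zero, then invokes $\mathbf d\,\mathbf c^{\top}=0$ to make the $(1,1)$ entry equal to $1$; this amounts to reading the (admittedly garbled) hypotheses on $\mathbf c,\mathbf d$ as $\mathbf d\in\Tilde{\CD}$, $\mathbf c\in\Tilde{\CC}^{\perp}$ and $\mathbf d\,\mathbf c^{\top}=0$, and the block-diagonality is asserted rather than argued. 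You instead keep the full block matrix and compute the Schur complement of the invertible block $\Tilde{H}\Tilde{G}^{\top}$, showing the resulting scalar collapses identically to $1$ under a single membership hypothesis ($\mathbf d$ in the row space of $\Tilde{G}$, or dually $\mathbf c$ in the row space of $\Tilde{H}$). Your route is more robust --- it tolerates nonvanishing off-diagonal blocks and isolates precisely which hypothesis on $\mathbf c,\mathbf d$ is actually used --- at the cost of a slightly heavier determinant computation; note only that the sufficient conditions you extract ($\mathbf d\in\Tilde{\CC}$ or $\mathbf c\in\Tilde{\CD}^{\perp}$) are not the same as the ones the paper's proof implicitly relies on, so whichever reading of the statement's conditions on $\mathbf c$ and $\mathbf d$ is intended, one of the two arguments needs its hypotheses stated explicitly.
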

\begin{proof}
From Proposition~\ref{prop-4q}, it is enough to show that   $(\Tilde{\CC}_{ex_1},\Tilde{\CD}_{ex_2})$ is an LCP of codes. By the construction of $\Tilde{\CC}_{ex_1}$ and $\Tilde{\CD}_{ex_2}$, we obtain that $\Tilde{\CC}_{ex_1}$ is a linear code of length $n+1$ over $\FF_{q^m}$ with dimension $k+1$ and $\Tilde{\CD}_{ex_2}$ is a linear code of length $n+1$ over $\FF_{q^m}$ with dimension $n-k$. It implies that $\dim\left(\Tilde{\CC}_{ex_1}\right)+\dim\left(\Tilde{\CD}_{ex_2}\right) = n+1$. Now we have to show that $\Tilde{\CC}_{ex_1}\cap \Tilde{\CD}_{ex_2}=\{0\}$.

Let $\mathbf x\in \Tilde{\CC}_{ex_1}\cap \Tilde{\CD}_{ex_2}$. This implies $\mathbf x=\alpha \Tilde{G}_{ex_1}=\beta G_2$, where $\alpha\in \FF_{q^m}^{k+1}$, $\beta\in \FF_{q^m}^{n-k}$ and $G_2$ is a generator matrix of $\Tilde{\CD}_{ex_2}$. This gives that $\alpha \Tilde{G}_{ex_1} \Tilde{H}^\top_{ex_2}=0$ as $G_2\Tilde{H}^\top_{ex_2}=0$. Then
  $$\alpha \Tilde{G}_{ex_1} \Tilde{H}^\top_{ex_2} = 0\implies \alpha\left(\begin{array}{cc}
     1 & \mathbf d\\
     0 & \Tilde{G} 
 \end{array}\right)\left(\begin{array}{cc}
     1 & 0\\
    \mathbf c^\top & \Tilde{H}^\top 
 \end{array}\right) =0\implies \alpha \left(\begin{array}{cc}
     1+\mathbf d \mathbf c^\top & 0\\
     0 & \Tilde{G}\Tilde{H}^\top 
 \end{array}\right)=0.$$ Since $(\Tilde{\CC},\Tilde{\CD})$ is an LCP of codes (using identity $\pi$, $\sigma$ and $M$ in Proposition~\ref{prop-qq}), from Proposition~\ref{prop-qq}, we get that $rank\left(\Tilde{G}\Tilde{H}^\top\right)=k$. Hence as $rank\left(\Tilde{G}\Tilde{H}^\top\right)=k$ and $ \mathbf d \mathbf c^\top =0$, we have $\alpha=0$ i.e., $\mathbf x = 0$.
\end{proof}
\section{Constacyclic $\FF_q$-linear ACP of codes over $\FF_{q^m}$}\label{sec:consta} 
This section focuses on counting formulas for constacyclic $\FF_q$-linear ACP of codes over $\FF_{q^m}$. Recall that two constacyclic $\FF_q$-linear codes $\CC$ and $\CD$ over $\FF_{q^m}$ form an ACP of codes if $\CC\bigoplus_{\FF_q}\CD=\FF_{q^m}^n$. Let denote $q_i = q^{d_i}$. In the following theorem, we derive necessary and sufficient conditions under which a pair of constacyclic $\FF_q$-linear codes over $\FF_{q^m}$ form an ACP of codes (using identity $\pi$, $\sigma$ and $M$ in Proposition~\ref{prop-qq}).
\begin{thm}\label{th-sqw}
 Let $\CC$ and $\CD$ be two $\lambda$-constacyclic $\FF_q$-linear additive codes over $\FF_{q^m}$, whose Canonical decompositions are given by \eqref{eq-11qq} and \eqref{eq-21qq}, respectively. Then the pair $(\CC, \CD)$ forms an ACP of codes (using identity $\pi$, $\sigma$ and $M$ in Proposition~\ref{prop-qq}) if and only if  $\CC_i \bigoplus_{\mathcal{K}_i} \CD_i=\mathcal{I}_i$, for $0 \leq i \leq r$  and $\CC_{r+k} \bigoplus_{\mathcal{K}_{r+k}} \CD_{r+k} = \mathcal{F}_{r_+k}$, for $1 \leq k \leq s-r$.
\end{thm}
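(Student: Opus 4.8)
The proof should be a direct "componentwise" argument exploiting the fact that $\FF_{q^m}^n$, viewed through the ring isomorphisms, decomposes as a direct sum $\mathcal{R}^{(q^m)}_{n,\lambda}=\bigoplus_{i=0}^{r}\mathcal{I}_i\oplus\bigoplus_{k=1}^{s-r}\mathcal{F}_{r+k}$, and that both $\CC$ and $\CD$ decompose compatibly with this splitting by Proposition~\ref{prop-1.1}. First I would recall that, as $\FF_q$-vector spaces, $\mathcal{I}_i$ and $\mathcal{F}_{r+k}$ are mutually $\FF_q$-linearly independent and their sum is all of $\FF_{q^m}^n$; this is the crucial structural input and follows from the CRT decomposition recalled before Proposition~\ref{prop-1.1}. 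Then the condition $\CC\bigoplus_{\FF_q}\CD=\FF_{q^m}^n$ splits into the two independent requirements $\CC\cap\CD=\{0\}$ and $\dim_{\FF_q}\CC+\dim_{\FF_q}\CD=nm$, each of which I will translate into the component blocks.

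\textbf{Sufficiency.} Assume $\CC_i\bigoplus_{\mathcal{K}_i}\CD_i=\mathcal{I}_i$ for $0\le i\le r$ and $\CC_{r+k}\bigoplus_{\mathcal{K}_{r+k}}\CD_{r+k}=\mathcal{F}_{r+k}$ for $1\le k\le s-r$. Each such equality, being a direct sum decomposition of $\mathcal{K}_i$- (resp.\ $\mathcal{K}_{r+k}$-) modules, is in particular a direct sum decomposition of $\FF_q$-vector spaces, so $\CC_i\cap\CD_i=\{0\}$ and $\dim_{\FF_q}\CC_i+\dim_{\FF_q}\CD_i=\dim_{\FF_q}\mathcal{I}_i$ (resp.\ with $\mathcal{F}_{r+k}$). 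Summing the dimension identities over all indices and using $\sum\dim_{\FF_q}\mathcal{I}_i+\sum\dim_{\FF_q}\mathcal{F}_{r+k}=nm$ gives $\dim_{\FF_q}\CC+\dim_{\FF_q}\CD=nm$. For the intersection, take $\bx\in\CC\cap\CD$; writing $\bx$ in its unique components relative to the ambient decomposition, each component lies in $\CC_i\cap\CD_i$ (or $\CC_{r+k}\cap\CD_{r+k}$), hence is zero, so $\bx=0$. Therefore $(\CC,\CD)$ is an ACP of codes.

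\textbf{Necessity.} Conversely, suppose $(\CC,\CD)$ is an ACP of codes. For each fixed block, say $\mathcal{I}_i$, the inclusion $\CC_i+\CD_i\subseteq\mathcal{I}_i$ is clear; I must show equality and that the sum is direct. Directness is immediate: $\CC_i\cap\CD_i\subseteq\CC\cap\CD=\{0\}$ since the ambient sum is direct and $\CC_i,\CD_i$ sit in the single summand $\mathcal{I}_i$. For surjectivity I would argue by dimension count: summing $\dim_{\FF_q}(\CC_i\oplus\CD_i)=\dim_{\FF_q}\CC_i+\dim_{\FF_q}\CD_i$ over all blocks gives $\dim_{\FF_q}\CC+\dim_{\FF_q}\CD=nm=\sum\dim_{\FF_q}\mathcal{I}_i+\sum\dim_{\FF_q}\mathcal{F}_{r+k}$; since for every block $\dim_{\FF_q}(\CC_i\oplus\CD_i)\le\dim_{\FF_q}\mathcal{I}_i$ (resp.\ $\le\dim_{\FF_q}\mathcal{F}_{r+k}$), equality of the totals forces equality in each block. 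Hence $\CC_i\oplus\CD_i=\mathcal{I}_i$ as $\FF_q$-spaces; and since $\CC_i,\CD_i$ are $\mathcal{K}_i$-submodules by Proposition~\ref{prop-1.1}, this is a $\mathcal{K}_i$-module direct sum $\CC_i\bigoplus_{\mathcal{K}_i}\CD_i=\mathcal{I}_i$, and likewise for the $\mathcal{F}_{r+k}$ blocks.

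\textbf{Main obstacle.} The only subtle point is the bookkeeping that makes the two conditions ("intersection zero" and "full dimension") decouple block by block — i.e.\ verifying carefully that the ambient decomposition $\FF_{q^m}^n\cong\bigoplus\mathcal{I}_i\oplus\bigoplus\mathcal{F}_{r+k}$ respects $\FF_q$-dimension additively and that $\CC,\CD$ are "block-diagonal" with respect to it, which is exactly what Proposition~\ref{prop-1.1} (via \cite[Lemma 3.1]{cao15}) guarantees. Everything else is a routine transfer between the "one global direct sum" statement and the "finitely many local direct sum" statements. I should also note in passing that the hypothesis "using identity $\pi$, $\sigma$ and $M$" is not actually used in this particular equivalence — the statement is purely about the module/vector-space direct sum $\CC\bigoplus_{\FF_q}\CD=\FF_{q^m}^n$ — so the proof need not invoke the inner product at all.
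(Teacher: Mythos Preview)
Your proof is correct and follows exactly the approach the paper intends: the paper's own proof is the single line ``The proof can be easily derived using Proposition~\ref{prop-1.1},'' and your componentwise dimension-and-intersection argument is precisely the routine derivation being alluded to. Your observation that the ``identity $\pi$, $\sigma$ and $M$'' hypothesis plays no role here is also accurate.
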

\begin{proof}
The proof can be easily derived using Proposition~\ref{prop-1.1}.   
\end{proof}
In order to study constacyclic $\FF_q$-linear ACP of codes over $\FF_{q^m}$, we first recall that, $|\mathcal{K}_i|=q^{d_i}$, where $|\CC^{q}_{1+ti}|=d_i$ for all $0 \leq i \leq s$. Further $|\mathcal{I}_i|=\left(q^m\right)^{d_i}=|\mathcal{K}_i|^{m}$, for all $0\leq i \leq r$. Also, $|\mathcal{F}_{r+k}|=\left(q^m\right)^{d_{r+k}}=|\mathcal{K}_{r+k}|^{m}$, where $|\CC^{q}_{1+t(r+k)}|=d_{r+k}$ for all $1 \leq r+k \leq s-r$. This implies that $\mathcal{I}_i$ is an $\mathcal{K}_i$-linear space of $\mathcal{K}_i^m$ for $0 \leq i \leq r$ and $\mathcal{F}_{k+r}$ is an $\mathcal{K}_{k+r}$-linear space of $\mathcal{K}_{r+k}^m$ for $1 \leq k \leq s-r$. It is well known that the number of $\FF_{q^{d_i}}$-subspace of $\FF_{q^{d_i}}^m$ is equal to $\mathlarger{\sum}\limits_{v=0}^m \left[ \begin{array}{cc}
     m  \\
     v 
\end{array}\right]_{{q_i}}=\dfrac{(q_i^m-1)(q_i^m-q_i)\cdots(q_i^m-q_i^{k-1})}{(q_i^v-1)(q_i^v-q_i)\cdots(q_i^v-q_i^{v-1})}$.
From the above discussion, we have the following theorem.
\begin{thm}\label{asa}
 The number of $\lambda$-constacyclic ACP of codes over $\FF_{q^m}$ is equal to   
 $$\mathlarger{\prod}_{i=1}^s\left(\mathlarger{\sum}\limits_{v=0}^m \left[ \begin{array}{cc}
     m  \\
     v 
\end{array}\right]_{{q_i}}q_i^{v(m-v)}\right).$$
\end{thm}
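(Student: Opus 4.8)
The plan is to reduce the counting problem to a product over the constituent blocks in the canonical decomposition, using Theorem~\ref{th-sqw} as the structural backbone. By that theorem, a pair $(\CC,\CD)$ of $\lambda$-constacyclic $\FF_q$-linear codes forms an ACP of codes if and only if $\CC_i \oplus_{\mathcal{K}_i} \CD_i = \mathcal{I}_i$ for $0\leq i\leq r$ and $\CC_{r+k}\oplus_{\mathcal{K}_{r+k}} \CD_{r+k} = \mathcal{F}_{r+k}$ for $1\leq k\leq s-r$. Since the decomposition is a direct sum, the choices in distinct blocks are independent, so the total count is the product, over all $s$ blocks, of the number of \emph{complementary pairs} of $\mathcal{K}_i$-subspaces inside the ambient $\mathcal{K}_i$-space (which is $\mathcal{I}_i$, a rank-$m$ space over $\mathcal{K}_i$, or $\mathcal{F}_{r+k}$, likewise rank-$m$ over $\mathcal{K}_{r+k}$). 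Thus it suffices to count, for a fixed finite field $\FF_{q_i}$ with $q_i=q^{d_i}$, the number of ordered pairs $(U,W)$ of subspaces of $\FF_{q_i}^m$ with $U\oplus W=\FF_{q_i}^m$.

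Next I would carry out this local count. Fix $v$ to be $\dim_{\FF_{q_i}} U$, so $\dim_{\FF_{q_i}} W = m-v$. The number of $v$-dimensional subspaces $U$ is the Gaussian binomial coefficient $\left[\begin{array}{c} m \\ v\end{array}\right]_{q_i}$. For a fixed such $U$, I would count the complements $W$: a complement is determined by choosing an ordered basis of $W$ whose images in the quotient $\FF_{q_i}^m/U$ form a basis, then dividing by the order of $\mathrm{GL}_{m-v}(\FF_{q_i})$ (change of basis within $W$); alternatively, and more cleanly, the set of complements to a fixed $v$-dimensional subspace in an $m$-dimensional space is an affine space over $\mathrm{Hom}(\FF_{q_i}^{m-v}, U)$, hence has cardinality $q_i^{v(m-v)}$. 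Multiplying, the number of complementary pairs with $\dim U = v$ is $\left[\begin{array}{c} m \\ v\end{array}\right]_{q_i} q_i^{v(m-v)}$, and summing over $v$ from $0$ to $m$ gives $\sum_{v=0}^m \left[\begin{array}{c} m \\ v\end{array}\right]_{q_i} q_i^{v(m-v)}$ complementary pairs in block $i$.

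Finally I would assemble the pieces: taking the product over $i=1,\dots,s$ (noting that the blocks indexed $0\le i\le r$ contribute via $\mathcal{I}_i$ and the blocks $r+1\le i\le s$ via $\mathcal{F}_i$, but in both cases the ambient object is a rank-$m$ free module over a field $\mathcal{K}_i$ of size $q_i = q^{d_i}$, so the same local count applies uniformly) yields
\[
\prod_{i=1}^s\left(\sum_{v=0}^m \left[\begin{array}{c} m \\ v\end{array}\right]_{q_i} q_i^{v(m-v)}\right),
\]
which is the claimed formula. The main obstacle, and the step deserving the most care, is the local count of complements to a fixed subspace — specifically justifying cleanly that this number is $q_i^{v(m-v)}$ independent of $U$ (the affine-space argument over $\mathrm{Hom}(\FF_{q_i}^{m-v},U)$, or equivalently counting bases and dividing by $|\mathrm{GL}_{m-v}(\FF_{q_i})|$); everything else is the bookkeeping of transporting Theorem~\ref{th-sqw} through the direct-sum decomposition and observing the independence of choices across blocks.
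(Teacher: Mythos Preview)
Your proposal is correct and follows essentially the same route as the paper: both invoke Theorem~\ref{th-sqw} to reduce to independent block-by-block counting, count $v$-dimensional subspaces by the Gaussian binomial, count complements to a fixed subspace, sum over $v$, and take the product over the $s$ blocks. Your justification that the number of complements is $q_i^{v(m-v)}$ via the affine space over $\mathrm{Hom}(\FF_{q_i}^{m-v},U)$ is in fact cleaner than the paper's direct quotient computation (which contains a typo, writing $q_i^{m-v}$ in an intermediate display although the final formula is correct).
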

\begin{proof}
By Theorem~\ref{th-sqw}, the pair $(\CC, \CD)$ is an ACP if and only if $\CC_i \bigoplus_{\mathcal{K}_i} \CD_i=\mathcal{I}_i$, for $0 \leq i \leq r$  and $\CC_{r+k} \bigoplus_{\mathcal{K}_{r+k}} \CD_{r+k} = \mathcal{F}_{r_+k}$ for $1 \leq k \leq s-r$. Equivalently, $(\CC, \CD)$ is an ACP if and only if  $(\CC_i, \CD_i)$ is an LCP for all $0 \leq i \leq s$. If $(\CC_i, \CD_i)$ form an LCP for $0 \leq i \leq s$, then the number of choices of $\CC_i$ is equal to $$\mathlarger{\sum}\limits_{v=0}^m \left[ \begin{array}{cc}
     m  \\
     v 
\end{array}\right]_{{q_i}}$$ and the number of choices of $\CD_i$ is equal to 
$$\dfrac{(q_i^m-q_i^{v})(q_i^m-q_i^{v+1})\cdots(q_i^m-q_i^{m-1})}{(q_i^{m-v}-1)(q_i^{m-v}-q_i)\cdots(q_i^{m-v}-q_i^{m-v-1})}=q_i^{m-v}.$$
Therefore, the number of ACP of codes $(\CC, \CD)$ is
$$\# (\CC, \CD) = \mathlarger{\prod}_{i=1}^s\left(\mathlarger{\sum}\limits_{v=0}^m \left[ \begin{array}{cc}
     m  \\
     v 
\end{array}\right]_{{q_i}}q_i^{v(m-v)}\right).$$ This completes the proof.
\end{proof}
We present a numerical example as follows.
\begin{example}
Take, $q=3^2$ and consider cyclic $\FF_3$-linear additive code over $\FF_{3^2}$ of length $10$. In this special case, we have $p=3, m=2, \lambda=1$ and $n=10$. Next, $X^{10}-1=\prod_{i=0}^3 p_i(X)$, where
\begin{align*}
& p_0(X)=X-1=2+X&\\
& p_1(X)=1+X&\\
& p_2(X)=1+X+X^2+X^3+X^4&\\
& p_3(X)=1+2X+X^2+2X^3+X^4&
\end{align*}
are irreducible polynomials in $\FF_3[X]$. By Theorem~\ref{asa}, we deduce that the number of  cyclic $\FF_3$-linear ACP of codes over $\FF_{3^2}$ of length $30$
equals 
$$\# (\CC, \CD) = \mathlarger{\prod}_{i=0}^3\left(\mathlarger{\sum}\limits_{v=0}^2 \left[ \begin{array}{cc}
2  \\ v 
\end{array}\right]_{{3^{d_i}}}q_i^{v(2-v)}\right)=\mathlarger{\prod}_{i=0}^3\left(2+2q_i\right)=4,40,896.$$
\end{example}

In Table~\ref{table1.0}, we present some examples of LCD codes corresponding to additive code from Theorem~\ref{th-1zx}.

\begin{table}[h]  \centering
\begin{tabular}{|c|c|c|c|} \hline
No. & Generator of additive codes  & Trace Codes  & Remark   \\  
& & $Tr(\CC):=[n, k, d_H]_2$  & \\
\hline
1 &  $\left(\begin{array}{cccccc}
  \omega & \omega^2 & 0 & \omega & 0 \\
  0      & \omega   & \omega^2 & 0 & \omega 
\end{array}\right)$ 
& $[5,~2,~3]$ & not LCD code but optimal    \\  \hline
2 &  $\left(\begin{array}{ccccccc}
  \omega & \omega^2 & 0 & \omega & \omega & 0 \\
  0      & \omega   & \omega^2 & 0 &\omega &\omega 
\end{array}\right)$ 
& $[6,~2,~4]$ & not LCD code but optimal    \\  \hline
3 &  $\left(\begin{array}{ccccccccc}
  \omega & 0 &\omega^2 & 0 & \omega & \omega & 0 & 0 \\
  0      & \omega & 0 &\omega^2 & 0 & \omega & \omega & 0\\
  0 & 0 & \omega & 0 &\omega^2 & 0 & \omega & \omega 
\end{array}\right)$ 
& $[8,~3,~4]$ & not LCD code but optimal    \\  \hline
4 &  $\left(\begin{array}{ccccccccc}
  \omega & 0 &\omega^2 & 0 & \omega & \omega & 0 & 0 \\
  0      & \omega & 0 &\omega^2 & 0 & \omega & \omega & 0\\
  0 & 0 & \omega & 0 & 0  & 0 & \omega & \omega 
\end{array}\right)$ 
& $[8,~3,~3]$ & an LCD code but  not optimal   \\  \hline
5 &  $\left(\begin{array}{ccccccccc}
  \omega &\omega & 0 & \omega & \omega & 0 &\omega &\omega & 0\\
  0      & \omega &\omega & 0 & \omega & \omega & 0 &\omega &\omega
\end{array}\right)$ 
& $[9,~2,~6]$ & an LCD code and optimal    \\  \hline
6 &  $\left(\begin{array}{ccccccccc}
  \omega &\omega^2 & 0 & \omega & 0 &\omega &\omega & 0 & 0\\
  0      & \omega &\omega^2 & 0 & \omega & 0 &\omega &\omega & 0\\
  0 & 0 & \omega &\omega & 0 & \omega^2 & 0 & \omega^2 & \omega
\end{array}\right)$ 
& $[9,~3,~3]$ & an LCD code and optimal    \\  \hline
7 &  $\left(\begin{array}{ccccccccc}
  \omega & 1 & 0 & \omega & 0 &\omega &\omega & 0 & 0\\
  0      & \omega & 1 & 0 & \omega & 0 &\omega &\omega & 0\\
  0 & 0 & \omega &\omega & 0 & \omega^2 & 0 & \omega^2 & \omega
\end{array}\right)$ 
& $[9,~3,~4]$ & an LCD code and optimal   \\  \hline
8 &  $\left(\begin{array}{cccccccccc}
  \omega & \omega & 0 & \omega & \omega & 0 &\omega &\omega^2 & \omega & 0\\
  0      & \omega & \omega & 0 & \omega & \omega & 0 &\omega &\omega^2 & \omega
\end{array}\right)$ 
& $[10,~2,~6]$ & an LCD code and optimal    \\  \hline
9 &  $\left(\begin{array}{cccccccccc}
  \omega & \omega & 0 & 0 & \omega & 0 &\omega &\omega & 0 & 0\\
  0      & \omega & \omega & 0 & 0 & \omega^2 & 0 &\omega &\omega & 0\\
  0& 0      & \omega & \omega & 0 & 0 & \omega & 0 &\omega^2 &\omega
\end{array}\right)$ 
& $[10,~3,~5]$ & not LCD code and optimal    \\  \hline
10 &  $\left(\begin{array}{cccccccccc}
  \omega & \omega & 0 & 0 & \omega & 0 &\omega & 0 & 0 & 0\\
  0      & \omega & \omega & 0 & 0 & \omega & 0 &\omega & 0 & 0\\
  0& 0      & \omega & \omega & 0 & 0 & \omega^2 & 0 &\omega & 0\\
  0& 0& 0      & \omega & \omega & 0 & 0 & \omega & 0 &\omega
\end{array}\right)$ 
& $[10,~4,~4]$ & an LCD code and optimal    \\  \hline
11 &  $\left(\begin{array}{ccccccccccc}
  \omega & \omega & 0 & \omega & \omega & 0 &\omega & \omega & 0 & \omega & 0\\
  0      & \omega & \omega & 0 & \omega & \omega & 0 &\omega & \omega & 0 & \omega
\end{array}\right)$ 
& $[11,~2,~7]$ & not LCD code and optimal    \\  \hline
12 &  $\left(\begin{array}{ccccccccccc}
  \omega & \omega & 0 & 0 & \omega & \omega & 0 & 0 &\omega & 0 & 0\\
  0      & \omega & \omega & 0 & 0 & \omega & \omega & 0 & 0 &\omega & 0\\
  0 & 0 & \omega & \omega & 0 & 0 & \omega & \omega^2 & 0 & 0 &\omega
\end{array}\right)$ 
& $[11,~3,~5]$ & an LCD code and optimal    \\  \hline
\end{tabular}
 \caption{Some LCD codes }\label{table1.0}
 \end{table}

\section{Conclusion}\label{sec:con}
The main contribution of this paper is that a very general type of ACP codes is characterized and constructed. Moreover, we obtained a condition for an additive pair of codes to be an ACP of codes. Furthermore, we provide a necessary and sufficient condition for an ACP of codes. Finally, we obtained an ACP of codes from given linear complementary pairs of codes.

\end{document}